\DeclareMathOperator*{\argmax}{arg\,max}
\DeclareMathOperator*{\argmin}{arg\,min}
\DeclareMathOperator*{\sinri}{\rm SINR_i}
\DeclareMathOperator*{\sinrvi}{\text{${\rm SINR}_{v_i}$}}
\DeclareMathOperator*{\sinr}{\rm SINR}
\DeclareMathOperator*{\sinrthetai}{\text{${\rm SINR} (p,\theta_i)$}}
\newtheorem{theorem}{Theorem}
\newtheorem{lemma}{Lemma}
\newtheorem{corollary}{Corollary}
\newtheorem{proposition}{Proposition}
\newtheorem{remark}{Remark}
\newtheorem{example}{Example}
\newtheorem{definition}{Definition}
\newtheorem{assumption}{Assumption}
\newtheorem{problem}{Problem}
\DeclareMathAlphabet{\mathpzc}{OT1}{pzc}{m}{it}
\def\BibTeX{{\rm B\kern-.05em{\sc i\kern-.025em b}\kern-.08em
    T\kern-.1667em\lower.7ex\hbox{E}\kern-.125emX}}
\begin{document}
\title{Mean-Field Transmission Power Control in Dense Networks}

\author{Yuchi~Wu, Junfeng~Wu, Minyi~Huang, and Ling~Shi
\thanks{The work by Y.~Wu and L.~Shi is supported by a Hong Kong RGC General Research Fund 16204218.}
\thanks{The work by J.~Wu is funded by NSFC No. 62003303 from National Science Foundation of China.}
\thanks{The work by M.~Huang is supported in part by Natural Sciences and Engineering Research Council (NSERC) of Canada under a Discovery Grant.}
\thanks{Y.~Wu and L.~Shi are with the Department of Electronic and Computer Engineering, Hong Kong University of Science and Technology, Hong Kong (e-mail: ywubj@connect.ust.hk; eesling@ust.hk).}
\thanks{J.~Wu is with the College of Control Science and Engineering, Zhejiang University, Hangzhou 310027, China (e-mail: jfwu@zju.edu.cn).}
\thanks{M.~Huang is with the School of Mathematics and Statistics, Carleton University, Ottawa, ON K1S 5B6, Canada (e-mail: mhuang@math.carleton.ca).}
}

\maketitle

\begin{abstract}
We consider uplink power control in wireless communication when a large number of users compete over the channel resources. The CDMA protocol, as a supporting technology of 3G networks accommodating signal from different sources over the code domain, represents the orthogonal multiple access (OMA) techniques. With the development of 5G wireless networks, non-orthogonal multiple access (NOMA) is introduced to improve the efficiency of channel allocation. Our goal is to investigate whether the power-domain NOMA protocol can introduce performance improvement when the users interact with each other in a non-cooperative manner. It is compared with the CDMA protocol, where the fierce competition among users jeopardizes the efficiency of channel usage. In this work, we conduct analysis with an aggregative game model, and show the existence and uniqueness of an equilibrium strategy. Next, we adopt the social welfare of the population as the performance metric, which is the average utility achieved by the user population. It is shown that under the corresponding equilibrium strategies, NOMA outperforms CDMA by higher efficiency of channel access for uplink communications.
\end{abstract}

\begin{IEEEkeywords}
aggregative game, successive interference cancellation, CDMA, NOMA, 5G.
\end{IEEEkeywords}

\section{Introduction}
\label{sec:introduction}
Power allocation has been conducted in wireless communications, wireless sensor networks as well as networked control systems. As a variant of resource allocation, it mainly deals with the tradeoff between the performance achieved and the power consumption. Moreover, recent advances in the fifth-generation (5G) communication network~\cite{shafi20175g} have led to a resurgence of interest in transmission power control.

Earlier works in wireless communications optimize the performance through appropriate energy allocation. Decentralized approaches are frequently investigated, among which game-theoretic methods are powerful tools for modeling non-cooperative channel access behaviors, especially for uplink users. Alpcan et al.~\cite{alpcan2002cdma} and Huang et al.~\cite{huang2006auction} considered a channel access game where each user strives for a better quality of service (QoS), and mechanism design is employed to improve the social effect. Moreover, centralized power allocation for maximizing the sum data rate is considered by Fischione et al.~\cite{fischione2009power}. Later on, to accommodate multiple users in uplink transmission more efficiently with successive interference cancellation, optimal power allocation has been investigated by Xu et al.~\cite{xu2013decentralized}.


Recent studies address transmission power allocation for a large number of users at the mean-field limits. Huang et al.~\cite{huang2003individual,huang2004uplink} investigated uplink power control of a large number of players with a mean-field model. Moreover, Semasinghe and Hossain~\cite{semasinghe2015downlink} dealt with downlink power allocation and compared the mean-field equilibrium performance under different utility functions. Typically, mean-field games consider dynamic models with an infinite number of players, where the impact from the opponents is modelled collectively as a mean-field term, as illustrated by Caines et al.~\cite{caines2017handbook}. As an alternative of game-theoretic model with mean-field type coupling, an aggregative game model~\cite{jensen2018aggregative} featuring static decision with either finite or infinite players is more appropriate for resource allocation in a wireless communication system.

Aggregative game has been widely adopted in wireless communications and networked games. For example, decentralized channel access among secondary users in a cognitive radio system is modelled by Pang et al.~\cite{pang2010design} as an aggregative game. A similar model is also employed for a large-scale channel access with incentive design at the base station, as investigated by Zhou et al.~\cite{zhou2017private}. These works adopt conventional mean-field modelling where each agent faces the same aggregate of the opponents' strategy. However, in a more recent work on networked games by Parise and Ozdaglar~\cite{parise2019variational}, where the mean-field aggregate faced by different agents is dependent on its neighbouring network topology, variational inequalities are employed for the equilibrium analysis.

Motivated by the advances of the 5G communication network, several recent works~\cite{zhu2017optimal,sun2018joint,zhang2018secure} investigated the optimal power allocation problem with non-orthogonal multiple access~(NOMA) employed among users sharing the same portion of the physical resource. The receivers under the NOMA protocol adopt successive interference cancellation~(SIC), which improves the chance of successful decoding at the receiver. According to Vaezi et al.~\cite{vaezi2018non}, the advance of computational capability of user equipment has enabled the deployments of NOMA for practical applications.

Our results are partially related to the existence and uniqueness of equilibrium strategy under CDMA, as investigated by Alpcan et al.~\cite{alpcan2002cdma} as well as Aziz and Caines~\cite{aziz2016mean}, and we have extended the equilibrium analysis to NOMA. The main difference between the modelling in this work and in the literature is two-fold. First, we employed a aggregative game model with mean-field limits for NOMA, which to the best of our knowledge is novel. Second, we have embedded the power control game under CDMA and NOMA in a unified framework, which enables a qualitative performance comparison. Following the direction of Xu and Cumanan~\cite{xu2017optimal} as well as Wei et al.~\cite{wei2019performance}, this work further discusses the social welfare comparison of uplink channel access between the conventional orthogonal multiple access schemes and NOMA. Past works~\cite{wei2018performance,chen2017optimization,zeng2017capacity} only investigated the performance comparison with centralized resource allocation, where all the users have a common objective function and their channel access behaviors are dictated by the base station. In this paper, we consider a population of selfish users accessing the shared wireless channel with a non-cooperative game model. This is suitable when the channel users are interested in pursuing their individual objectives and are accommodated by a micro base station in the 5G network.

When modeling the user population behaviors under the NOMA protocol, one faces several challenges:
\begin{itemize}
\item[(1)] In the game, channel gain at each user varies. Hence, it is crucial to tackle the complexity arising from characterizing individual behaviors in presence of \textbf{a large population};
\item[(2)] Though the equilibrium strategy for power control game under CDMA has been well analysed~\cite{alpcan2002cdma,aziz2016mean}, there is no analysis on the existence and uniqueness of power control strategy under NOMA protocol. Due to the asymmetric pattern of the decoding algorithm under NOMA, the equilibrium analysis for CDMA cannot be directly extended to NOMA;
\item[(3)] The specific \textbf{channel fading model} for uplink users is \textbf{not always available} since the physical environment where the users are located is in general challenging to describe. Thus, it is desirable to have general results not depending on the specific form of channel distribution;
\item[(4)] The \textbf{social welfare comparison} between two communication protocols is in general challenging in a non-cooperative game, as the \textbf{trend of changes} in equilibrium strategies under different protocols are \textbf{difficult to evaluate}.
\end{itemize}

We will address these challenges through appropriate system modeling and perturbation-based approaches. The contributions of this paper are listed below:
\begin{itemize}
\item[(1)] We model the interactions among a large number of players under CDMA and NOMA as \textbf{aggregative games}, where the opponents' action is modelled as a mean-field effect (i.e., the interference) for a large population. The \textbf{equilibria} under CDMA and NOMA are characterized in~\textbf{\thref{THM:CDMA_contraction_mapping}} and~\textbf{\thref{THM:NOMA2_contraction_mapping}}. To the best of our knowledge, such an aggregative game framework with mean-field limit modelling for NOMA (with non-identical mean-field interferences faced by different users) is new, and the existence and uniqueness of \textbf{equilibrium} power control strategy under \textbf{NOMA} has not been addressed in the literature. The results we have obtained do \textbf{not depend} on a specific \textbf{channel fading model}, i.e., the distribution of the channel gain, and can be applied to a wide range of physical environments;
\item[(2)] Through establishing a contraction property of the strategy update at each user in the population, a \textbf{distributed algorithm} (\textbf{Algorithm~\ref{algo:distributed_NE}}) is proposed for CDMA (\textbf{\thref{THM:CDMA_contraction_mapping}}) and NOMA (\textbf{\thref{THM:NOMA2_contraction_mapping}}) such that the strategy profile of the user population is guaranteed to converge to the unique equilibrium strategy;
\item[(3)] We compare the social welfare at the equilibria between CDMA and NOMA. Due to the difficulties in calculating the equilibrium strategies and directly evaluating the corresponding values of the social welfare, we propose a \textbf{perturbation-based} approach to find the trend of change in the equilibrium performance from CDMA to NOMA. The \textbf{social welfare} achieved under the equilibrium of power control game with NOMA \textbf{strictly dominates} that with CDMA (\textbf{\thref{lemma:nec_optimal}}).
\end{itemize}

The remainder of this paper is organized as follows. The preliminaries on wireless communications are introduced in Section~\ref{sec:preliminaries}. In Section~\ref{sec:MFG}, we formulate the interaction between channel users as an aggregative game. Section~\ref{sec:CDMA} and Section~\ref{sec:NOMA} characterize the equilibrium strategies under CDMA and NOMA protocols, respectively. In Section~\ref{sec:performance_comparison}, we compare the social welfare at the equilibria. Section~\ref{sec:individual_properties} studies the individual behaviors at the equilibrium. To illustrate the social welfare comparison results, we present the numerical simulations in Section~\ref{sec:simulations}. Section~\ref{sec:conclusions} concludes the paper.

To highlight the structure and contribution of this paper, a flow chart is displayed in Fig.~\ref{fig:logic_flow}, where the comparison between CDMA and NOMA forms a key contribution.

\begin{figure}[H]
	\begin{center}
		\includegraphics[width=0.5\textwidth]{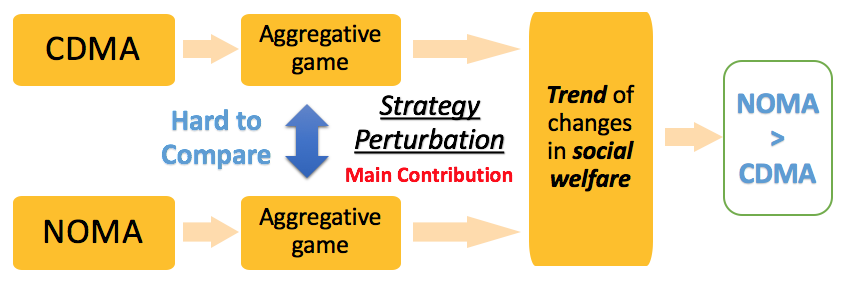}	
		\caption{The logic flow and main contribution of this paper.}
		\label{fig:logic_flow}
	\end{center}
\end{figure}

\subsection*{Notations:}
We denote the set of non-negative real numbers by $\mathbb{R}_+$, the set of positive real numbers by $\mathbb{R}_{++}$, and the set of non-negative integers by $\mathbb{N}_+$. For any Lebesgue measurable set~$A \subset \mathbb{R}$, denote its Lebesgue measure by $\lambda(A)$. The abbreviation ``a.e." is adopted for ``almost everywhere". For any $l \leq u$, define a truncation operator as $[x]_l^u:=\min \{u, \ max\{l,x\} \}$ for $x \in \mathbb{R}$. The modulus of a complex number~$z = x+ iy \in \mathbb{C}$ is~$\norm{z}:=\sqrt{x^2+y^2}$, where~$x,y \in \mathbb{R}$.

\section{Preliminaries on uplink wireless communication}
\label{sec:preliminaries}
In this paper, we consider a transmission power allocation problem that emerges in wireless networks where multiple signal sources are users of a single wireless communication channel. Different protocols are considered at the physical layer (PHY) of the communication channel, and the received signals are decoded by the base station. The realized data rate, i.e., the maximum rate of reliable communication supported by this channel, is the performance metric. The block diagram of the problem is presented in Fig.~\ref{fig:system}, and the details of each component are elaborated below.
\begin{figure}[H]
	\begin{center}
		\includegraphics[width=0.5\textwidth]{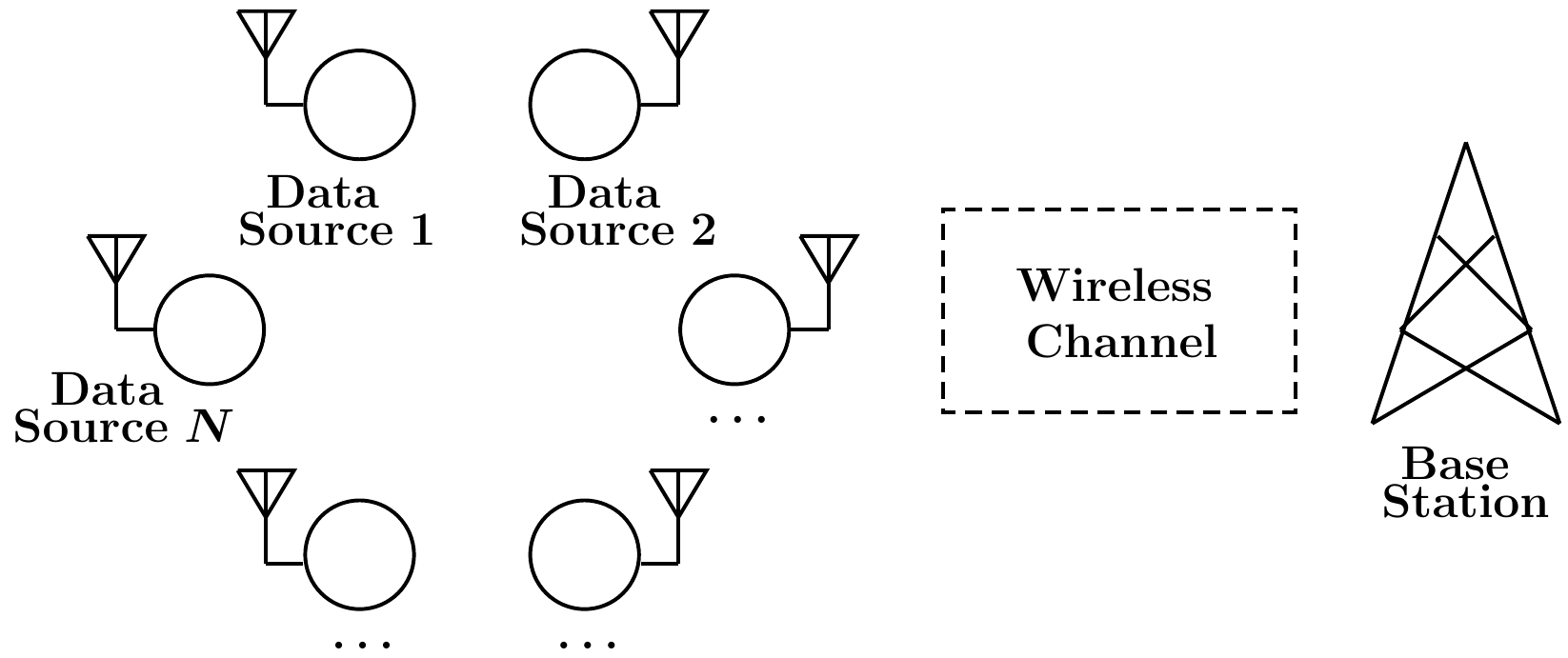}	
		\caption{Uplink channel access competitions of multiple users over a shared wireless channel to the base station.}
		\label{fig:system}
	\end{center}
\end{figure}

\subsection{Data sources}
There are~$N$ independent data sources as selfish users of the wireless channel. The information provided by each data source is modelled as a continuous-time waveform $x_i (t) \ (t \in \mathbb{R}_+)$, where $i \in \mathcal{N} := \{1,2,\ldots,N\}$. Assume that each of the signal source $x_i(t)$ has a unit average power level throughout the time horizon, i.e., $\lim\limits_{T \to \infty} \frac{1}{T} \int\limits_0^T \mathbb{E} [\norm{x_i(t)}^2] dt = 1$ holds for any~$i \in \mathcal{N}$. This is for notational convenience in describing the power level of the transmitted signal at the antennas.

Without loss of generality, we assume that each of the original continuous-time signal has a bandwidth bounded by~$B$, which is considered as a physical limitation of the communication channel. Based on Nyquist sampling theorem, we can always find a common sampling period $\Delta t$ for all the data sources such that the inequality $\frac{1}{\Delta t} > 2 B$ holds, and the sampled signal under this sampling rate carries full information of the original signal sources. For transmission through digital devices, a sampled version $x_i [k] \ (k \in \mathbb{N}_+)$ of the original signal $x_i(t) \ (t \in \mathbb{R}_+)$ is generated. Hence, we obtain the discrete-time signal sources $x_i [k]:=x_i (k \cdot \Delta t)$ for each $i \in \mathcal{N}$. Since each signal source contains different amount of information, they feature different data rates. We denote the data rate of source $i$ as $R_i \ \rm bits/s$ ($R_i > 0$).

\subsection{Gaussian channel and transmission antennas}
The communication channel between the data sources and the base station can be modelled as an additive white Gaussian noise (AWGN) channel as follows
\begin{equation}
y [k] = \sum\limits_{i=1}^N \sqrt{a_i} h_i x_i [k] + w [k],
\label{eq:AWGN}
\end{equation}
where $x_i [k]$ is the transmitted signal with transmission power level is $a_i$, $w[k] \sim \mathcal{CN} (0, N_0)$ is a complex Gaussian white noise process, and~$y [k]$ is the aggregate received signal at the base station. In addition, we assume time-invariant flat-fading character of the channel at each data source, and denote the uplink channel gain for each data source $i$ as a $\mathbb{C}$-valued random variable $h_i$. The probability density function of the squared magnitude $\norm{h_i}^2 \geq 0$ of the channel gain is~$f_i (x)$. Obviously, $f_i (x) \geq 0$ for any $x \in \mathbb{R}_+$ and $\int_{\mathbb{R}_+} f_i (x) dx = 1$.

In this paper, we assume that the channel gains $\norm{h_1}^2, \norm{h_2}^2, \ldots, \norm{h_N}^2$ are i.i.d. random variables, for which the probability density function is simply denoted as~$f$. Intuitively, all the local users are located within the coverage area of a base station, so that they share similar environments. For more generality, we do not take a specific form of distribution for the channel gain $h_i$. However, we require the following assumption to facilitate the analysis in this paper.

\begin{assumption}
The squared norm of the channel gain has a finite first-order moment: $\mathbb{E} [\norm{h_i}^2]:= \int_{\mathbb{R}} x f (x) dx <\infty$.
\thlabel{assumption:channel_gain}
\end{assumption}

This assumption enables the adoption of the strong law of large numbers (SLLN) when evaluating the interference from other transmitters.

\begin{example}
Rayleigh fading is commonly adopted to characterize the overall fading effect when a large number of reflectors and scatters present in the path, see~\cite{tse2005fundamentals}. The squared magnitude~$\norm{h_i}^2$ of the channel gain has the following probability density function
\begin{equation}
f (x) =
\begin{cases}
\frac{1}{\sigma^2} \exp \left( - \frac{x}{\sigma^2} \right), & x \geq 0 ; \cr
0, & \text{Otherwise},
\end{cases}
\label{eq:Raileigh_square}
\end{equation}
where the parameter~$\sigma > 0$. \thref{assumption:channel_gain} is clearly satisfied in this case.\\
\end{example}

The channel state information (CSI) $h_i$ is available at the transmitter $i$. This is easily achieved by sending a pilot signal prior to the transmission stage. At the receiver, the quality of a certain signal source in the received waveform~$y[k]$ is characterized by its~$\sinr$. To improve the $\sinr$ of a signal $x_i [k]$ for a higher probability of successful decoding, a possible method is to raise the transmission power level. For this purpose, the antenna at each data source $i$ will choose an appropriate power level $a_i$. Reliable decoding can be realized subject to good~$\sinr$ condition, see Section~\ref{subsec:BS}.

Through transmitting signal from different sources at different power levels, a novel multiple access strategy is implemented on the Gaussian channel (\ref{eq:AWGN}). This strategy exploits the power domain within the same physical channel resource, i.e., time, frequency, etc. Different from the conventional orthogonal multiple access protocols such as time-division multiple access (TDMA) and frequency-division multiple access (FDMA), the multiple access over the power domain is an example of non-orthogonal multiple access (NOMA), one of the supporting technologies in 5G networks~\cite{dai2015non}.

\subsection{Base station}
\label{subsec:BS}
A base station is located at the end of the uplink Gaussian channel, which is capable of sensing the channel gains of all communication links. The received signal from each data source~$i$ will be processed independently at the base station. The specific physical model adopted for describing the fundamental limits in signal decoding as well as two different decoding strategies is introduced below.

The data rate of source $i$ is $R_i > 0$. Based on Shannon's theorem~\cite{shannon1948mathematical}, in order to reliably decode the signal from data source $i$ with decoding error going to zero asymptotically as the block length increases, it is necessary for the instantaneous channel capacity of this link to exceed the corresponding data rate $R_i$~\cite{tse2005fundamentals} as follows:
\begin{equation}
\log_2 \left( 1 + \norm{h_i}^2 \sinri \right) > R_i.
\label{ineq:Shannon}
\end{equation}

Thus, the indicator for the successful decoding of the signal from data source $i$ is defined as
\begin{equation}
\gamma^{(i)} :=
\begin{cases}
1, & \sinri > \frac{2^{R_i} - 1}{\norm{h_i}^2}; \cr
0, & \text{Otherwise}.
\end{cases}
\label{eq:indicator}
\end{equation}

With the model for signal decoding as well as related notations introduced, the distribution for the decoding outcomes $\gamma^{(i)}$ relies on the distribution of~$\norm{h_i}$, and is determined by the outage probability
\begin{equation}
p_{out}^{(i)} (R_i) := \mathbb{P} \bigg \{ \log_2 \left( 1 + \norm{h_i}^2 \sinri \right) < R_i \bigg \}.
\label{eq:outage_prob}
\end{equation}

Accordingly, the probability of a packet arrival can be expressed as
\begin{equation}
\mathbb{P} \{\gamma^{(i)} = 1 \}  = 1 - p_{out}^{(i)} (R_i).
\label{eq:prob_arrival}
\end{equation}

As the users~$i \in \mathcal{N}$ are interested in supporting the success in decoding their source signal at the base station, each of them will attempt to achieve a higher instantaneous channel capacity  in order to accommodate its data waveform at the shared wireless medium.

For a Gaussian channel, there are mainly two decoding strategies when multiple signal sources are transmitting through their uplink channels: single user decoding~(SUD) in CDMA and multi-packet reception (MPR) in NOMA. More details for the two methods can be found in Zhang and Haenggi~\cite{zhang2014performance}. The power allocation for the data sources will differ when different decoding strategies are employed. Now we briefly explain the ideas of SUD in CDMA and MPR in NOMA. Details of these protocols and the corresponding decoding algorithms at the receiver will be elaborated in Sections~\ref{sec:CDMA} and~\ref{sec:NOMA}.

\subsubsection{Single user decoding (SUD) in CDMA}
For multiple data sources, a communication system employing CDMA deals with each communication link separately. While decoding the signal $x_i [k]$ from data source $i$, the signals from all the other sources $j \neq i$ are considered as interference. One typical application of this decoding strategy is the direct-sequence code division multiple access (DS-CDMA) system, where multiple users share a single channel. In order to mitigate the interference between different users, they adopt a group of signature sequences $\{\bm{s}_i\}_{i=1}^N$ such that they have high auto-correlation and low cross-correlation among themselves. Therefore, by selecting appropriate pseudo-noise (PN) sequences, it is possible to significantly reduce the interference of the received signal at the base station. Hence, the $\sinr$ of each user can maintain a certain level while being decoded.

Similar to FDMA and TDMA, the CDMA protocol belongs to the category of orthogonal multiple access (OMA), where different users are allocated different portions of the same type of physical resources, such as frequency bands, time slots, code chips, etc. On the other hand, when multiple users can share the same portion of physical resources, the protocol belongs to the category of non-orthogonal multiple access (NOMA), as introduced below.

\subsubsection{Multi-packet reception (MPR) in NOMA}
NOMA features resource sharing among different users such that a tradeoff between efficiency of the communication channel and the fairness among users is maintained. Multi-packet reception (MPR) decoding scheme is performed when NOMA is employed at the communication channel. In order to improve spectral efficiency and the user fairness~\cite{vaezi2018non}, NOMA is widely adopted in the fifth-generation (5G) communication network. In this paper, we analyse the effectiveness of power-domain NOMA. To facilitate the comparison of the two decoding methods, the same group of signature sequence~$\{\bm{s}_j\}_{j=1}^N$ as in the CDMA case is assigned to each user for implementation of the spread spectrum technique.

In NOMA, one of the key techniques is successive interference cancellation (SIC)~\cite{xu2013decentralized}, which is a recursive decoding algorithm. While the receiver is decoding the data from different signal sources following a certain order, the SIC algorithm will cancel the successfully decoded signals from the waveform before performing the subsequent decoding procedures, so that the overall probability of successful decoding is improved.

An example of SIC is the network-assisted interference cancellation and suppression (NAICS) user terminals adopted by 3GPP LTE-A standard~\cite{vaezi2018non}. The deployment of SIC at the receivers increases the time and space complexity. Fortunately, as shown by experiments, the complexity induced by adopting NOMA can be well accommodated by the capability of current user equipment~\cite{vaezi2019multiple}, which provided empirical evidence for the feasibility of the implementation of NOMA.

\section{Problem formulation: an aggregative game model for uplink power control}
\label{sec:MFG}

Each data source $i$ aims at uploading its local information to the base station, which can be considered as different users sharing a common computer network attempting to transmit its data to a distant server, such as cloud storage or cloud computing services.

Due to the selfish nature, each data source aims only at improving the chance of successfully decoding its own signal source at the base station while consuming less power.

The channel sharing behavior among these data sources can be modelled in a game-theoretic framework where different data sources are non-cooperative over the channel usage. To reduce the computational complexity for equilibrium analysis when a large number of players are involved, we adopt the idea of aggregative games~\cite{jensen2018aggregative}, where the impact of opponents' actions on each player is modelled as a collective effect. Then, each player is choosing its action as a response to this collective effect.

Now, we present the aggregative game model in detail.

\subsection{An aggregative game model}
In dense networks, a large number of users will attempt to access the Internet through a single access point~(AP). We employ a non-atomic game~\cite{bonneau2008non} model in this work, where the effect to the overall performance caused by a single player is negligible as the number of players~$N \to \infty$. Some notations are introduced below to adapt an aggregative game model to a large number of players, similar to Nekouei et al.~\cite{nekouei2017nash}.

In this paper, the interactive behavior among different users when competing over the uplink wireless channel is modelled as an aggregative game~$\mathcal{G} = \left(M, \mathcal{A}, u \right)$, of which each component is elaborated as follows.

\subsubsection{The set of players M}
We adopt symbolic representations for each type of users. With slight abuse of notations, the realization of the random variable~$h_i$ is denoted as~$h_i \in \mathbb{C}$ as well, which is the channel gain of user~$i$. Then, the identifier of this user is defined as $\theta_i = \norm{h_i}^2 \in \mathbb{R}_+$. We define the set of all possible identifiers as $M \subset \mathbb{R}_+$ such that any possible identifiers~$\theta_i$ belong to~$M$. To avoid triviality, we assume~$M \neq \emptyset$ and~$0 \notin M$. Thus, the set of all possible identifiers is a subset of positive real numbers, i.e., $M \subset \mathbb{R}_{++}$. Also, a Borel~$\sigma$-algebra is generated for the set $M$ and is denoted as $\mathcal{B}(M)$.


In practice, for players with different identifiers in a game with a large population, they will take up a certain ratio of presence in the population. We define a probability measure $P$ over the measurable space~$\left( M, \mathcal{B}(M) \right)$ to model the population of players, where the probability measure $P$ is induced by the probability density function $f(x)$ given in \thref{assumption:channel_gain}. To be specific, we have $P: \mathcal{B}(M) \to [0,1]$ such that for any $A \in \mathcal{B}(M)$, we have $P(A):=\int\limits_{x \in A} f(x) d \lambda(x)$ and $P(M)~=~1$. Hence, we choose $M \subset \mathbb{R}_{++}$ as the set of players.

\begin{assumption}
$f(x) > 0$ for any~$x \in M$, i.e., the probability density function~$f$ has positive value on the set of all possible identifiers~$M \subset \mathbb{R}_{++}$.
\thlabel{assumption:pdf_positive}
\end{assumption}

There is no loss of generality to consider~$f(x)>0$ for~$x \in M$. When this assumption is not satisfied, we can restrict the set of players to~$\tilde{M}=\{\theta_i \in M: f(\theta_i)>0\} \subset M$. All the subsequent analysis are performed with $\tilde{M}$. By definition of the probability measure $P$, it holds that $P(\tilde{M}) = P(M)=1$.

\begin{remark}
The identifiers of different users can be interpreted as sampled independently from the same distribution~$f$. Then, by the i.i.d. assumption on~$\{\theta_i, i \geq 1\}$, the empirical distribution of different types of players will converge to~$f$ if the number of players~$N \to \infty$. Hence, characterizing the ratio of players with different identities with a probability density function~$f(x)$ is feasible and theoretically sound.
\end{remark}

\subsubsection{The strategy space $\mathcal{A}$}
Now we define the strategy space of all players in the game.

Since the transmission antennas are controlled by analog circuits, the maximum transmission power is bounded. We define the feasible set of transmission power levels as a compact and convex set~$\mathcal{E}:=~[E_{\min},E_{\max}] \ \ (0 \leq E_{\min} < E_{\max} < \infty)$ without loss of generality.

As the number of players approaches infinity, and all the players with the same identifier are considered to be homogeneous, we denote the strategy of all the players in $M$ as a mapping $p: M \to \mathcal{E}$ such that the action chosen by the player $\theta_i \in M$ is $a_i := p(\theta_i) \in \mathcal{E}$.

Before the formal definition of the feasible strategy space of the players in $M$ is given, we introduce a new measure with which the aggregate effect of the players' strategies can be evaluated. For $M \subset \mathbb{R}_{++}$, a measure space is defined as $\left( M, \mathcal{B}(M), \lambda \right)$ based on the Lebesgue measure~$\lambda$. We define a new measure~$\nu$ as follows.
\begin{equation}
\nu (A) := \int_{A} w(x) d \lambda(x), \ \ \forall A \in \mathcal{B}(M),
\label{eq:new_measure}
\end{equation}
where the weight function $w: M \to [0,\infty)$ is
\begin{equation}
w(x) := x f(x), \ \ x \in M.
\label{eq:weight}
\end{equation}

Consequently, a new measure space $\left( M, \mathcal{B}(M), \nu \right)$ is generated, based on which we define the feasible strategy space as the set of functions
\begin{align}
\mathcal{A}&:=\{p: (p: M \to \mathbb{R}) \ \& \ (p \text{ is } \nu-\text{a.e. } \mathcal{E} \text{-valued} ) \},
\label{eq:action_space}
\end{align}
where~$p$ is a Lebesgue measurable function. We adopt the convention that two functions are identified as the same element of~$\mathcal{A}$ if they are equal~$\nu$-a.e.

\begin{definition}
For any Lebesgue measurable function~$g: M \to \mathbb{R}$, we introduce the norm
\begin{equation}
\norm{g}_1^{\nu} := \int\limits_{x \in M} \abs{g(x)} d \nu (x) = \int\limits_{x \in M} \abs{g(x)} w(x) d \lambda(x).
\end{equation}
\thlabel{def:weighted_1norm}
\end{definition}

\begin{definition}
For a bounded Lebesgue measurable function $f: M \to \mathbb{R}$, we define its essential supremum norm based on the measure~$\nu$ as~$\norm{f}_{\infty}^{\nu} := \inf \{ C > 0: \abs{f(x)} \leq C \text{ $\nu$-a.e.} \}$.
\thlabel{def:ess_supremum_norm}
\end{definition}

\subsubsection{The utility function $u$}
The utility function of a player with identifier~$\theta_i \in M$ is denoted as~$u(a_i, p, \theta_i)$, where it chooses an action~$a_i \in \mathcal{E}$ in response to~$p \in \mathcal{A}$, which is adopted by the opponents. Its utility function takes a tradeoff between the achieved data rate and energy consumptions, which has the form
\begin{equation}
u (a_i , p, \theta_i) = \log_2 \left( 1 + \theta_i \cdot \sinrthetai \right) - \beta_i a_i, \ \ \forall \theta_i \in M,
\label{def:utility}
\end{equation}
where $\beta_i>0$ is the power penalty parameter at player~$i$. As CDMA and NOMA adopt different decoding algorithms, for a fixed strategy profile~$p \in \mathcal{A}$, the resulting $\sinr$ at the receiver's side will have different values under these two protocols.

\begin{assumption}
The power penalty parameter is identical for each uplink user~$i \in \mathcal{N}$, i.e., $\beta_1=\beta_2=\cdots=\beta_N=\beta>0$.
\thlabel{assumption:identical_weight}
\end{assumption}

We will adopt this common power penalty parameter~$\beta$ for all the users in the following analysis.

\begin{remark}
The assumption for identical power penalty parameter~$\beta_i$ in the utility function indicates the cost for unit power consumption is the same for every uplink channel user. Intuitively, this adapts to the case where the power budget available to each user is obtained from the same type of power source, e.g., batteries. It should be noted that with the utility function given in its current form~(\ref{def:utility}), nonidentical power penalty parameter~$\beta_i$ for different type~$\theta_i \in M$ of users will not affect the main results as the existence of equilibrium strategies and the social welfare comparison results do not depend on the specific value of~$\beta_i$ associated with a certain user type~$\theta_i \in M$, though the monotonicity of the equilibrium strategy under CDMA as indicated by Corollary~1 may not hold anymore.
\end{remark}

\subsubsection{The information set $\ \mathcal{I}$}
Though the realization of the channel gain~$h_i$ is only disclosed to the user itself, the probability density function (PDF)~$f(x)$ of the user identity~$\theta_i = \norm{h_i}^2$ in the population is common knowledge. The realized value of the channel gain for each user can be interpreted as sampled independently from this distribution~$f(x)$. This knowledge sharing is achieved as the base station collects statistics of the realized channel magnitude $\theta_i = \norm{h_i}^2$, and then informs each participating player of this distribution (e.g., through a quantized version of the PDF).

For user~$\theta_i \in M$, the available information set is
\begin{equation}
\mathcal{I}(\theta_i)=\{ \theta_i, N_0 \}, \ \theta_i \in M, 
\label{def:info_set}
\end{equation}
where~$N_0$ is the power spectrum density of the Gaussian noise in the channel, and~$f(\theta)$ is the probability density function of the user identity. Thus, user $\theta_i$ will choose a power level from~$\mathcal{E}$ based on the locally available information set~$\mathcal{I}(\theta_i)$.

\begin{remark}
To facilitate the analysis of the collective effects of an infinite number of users, it is also of interest to explain the almost sure convergence of the aggregative effects when a large number of users are considered. For a group of countable number of players $\{\theta_1, \theta_2, \ldots\}$, if the power control strategy profile~$p(\cdot)$ is fixed, the interference when CDMA is employed can be expressed by a new random variable $a_i \theta_i = p(\theta_i) \theta_i$. Its first-order moment is finite, i.e.,~$\mathbb{E} [p(\theta_i) \theta_i] \leq \mathbb{E} [\norm{p}_{\infty} \cdot \norm{h_i}^2] = E_{\max} \cdot \mathbb{E} [\norm{h_i}^2] = E_{\max} \cdot \overline{h}_i^2 < \infty$, hence Kolmogorov's strong law of large numbers (SLLN)~\cite{durrett2010probability} applies when analyzing the aggregative effects. The applicability of SLLN to NOMA can be justified in a similar manner.
\end{remark}

\subsection{Solution concept for a game}
Assume all players are rational in the sense that each of them will seek to optimize its own utility function (i.e., self-interested). In addition, the rationality of the players' behaviors is common knowledge among all players participating in the game $\mathcal{G}$, as illustrated by Gibbons~\cite{gibbons1992game}. Then, each player $i$ will choose a power level from $\mathcal{E}$ to optimize its utility based on its own information set. We call it a decision-making process if a player $\theta_i \in M$ predicts the strategies of other players based on the information available and chooses an action in response to the predicted actions.

Different from an optimization problem, in a game-theoretic setup, the decision-making process of each player $i$ is in the form of a ``best response" to the action profile of its opponents. Under this decision pattern, it is of interest to find an action profile of all players on which they agree. Thus, we proceed to introduce the concept of $\epsilon$-Nash equilibrium~\cite{huang2007large} for an aggregative game with a finite number of players.

\begin{definition}[$\epsilon$-Nash equilibrium]
A strategy profile $p^* \in \mathcal{A}$ is an $\epsilon$-Nash equilibrium of a game among $N \in \mathbb{N}_+$ players, if for the identity $\theta_i \in M$ of all the $N$ players, we have
\begin{equation}
u (p^*(\theta_i), p^*, \theta_i) \geq u (a_i, p^*, \theta_i) - \epsilon, \ \ \forall \ a_i \in \mathcal{E},
\end{equation}
where $\epsilon>0$.
\thlabel{def:NE}
\end{definition}

When the number of users~$N \to \infty$, the positive real number~$\epsilon \to 0$ according to Theorem~5.7 in Nourian et al.~\cite{nourian2012nash}. Hence, as the population size tends to infinity, the $\epsilon$-Nash equilibrium converges to a mean-field equilibrium defined as follows.

\begin{definition}[Mean-field equilibrium]
A strategy profile $p^* \in \mathcal{A}$ is a mean-field equilibrium of a game with an infinite number of players if for any $\theta_i \in M$, we have
\begin{equation}
u (p^*(\theta_i), p^*, \theta_i) \geq u (a_i, p^*, \theta_i), \ \ \forall \ a_i \in \mathcal{E}.
\label{eq:def_NE}
\end{equation}
\thlabel{def:MFE}
\end{definition}

For the aggregative game~$\mathcal{G}$, we define the best response of user $\theta_i \in M$ to the opponents' strategy $p \in \mathcal{A}$ as a set-valued mapping $\mathcal{BR}: M \times \mathcal{A} \to 2^{\mathcal{E}}$. For each $\theta_i \in M$,
\begin{equation}
\mathcal{BR}(\theta_i, p):=\bigg \{ a_i^* \in \mathcal{E} : u(a_i^*, p, \theta_i) \geq u(a_i, p, \theta_i), \ \forall a_i \in \mathcal{E} \bigg \}.
\label{def:BR}
\end{equation}

Therefore, a strategy $p^* \in \mathcal{A}$ is a mean-field equilibrium if and only if for any $\theta_i \in M$, $p^*(\theta_i) \in~\mathcal{BR} (\theta_i,p^*)$. Thus, it is a strategy profile $p^* \in \mathcal{A}$ that every user $\theta_i~\in~M$ adopts, and has no incentive to unilaterally deviate from. Next, the details are shown below for analyzing the existence and uniqueness of mean-field equilibrium strategies under CDMA and NOMA.

\section{CDMA transmission power game}
\label{sec:CDMA}
\subsection{Single user decoding  (SUD) in CDMA}
\subsubsection{Descriptions of CDMA protocol and SUD decoding algorithm}
As mentioned in the preliminaries, with the CDMA communication protocol, each user is allocated a unique signature sequence so that its transmitted signal can be spread over different sub-carriers (i.e., code chips) in order to mitigate the interference between different users.

As mentioned by~\cite{tse1999linear,ferrante2015spectral}, through appropriate selection of signature sequences of length $n_s$, the squared cross-correlation between the signature sequences $\bm{s}_k$ and $\bm{s}_j$ of users $k$ and $j$ ($k \neq j$) can be expressed as $\rho_{k,j} = (\bm{s}'_k \bm{s}_j)^2 \approx \frac{1}{n_s} = \frac{\alpha}{N}$, which is the gain of the interference induced by user $k$ to the received signal from user~$j$. Since each user in CDMA is assigned a distinct signature sequence, the length~$n_s$ satisfies~$n_s \geq N$, i.e.,~$0<\alpha \leq 1$. In practice, as~$\rho_{k,j} = \frac{1}{n_s}$ , it is preferred to have a larger length~$n_s$ in order to reduce the cross-correlation between signature sequences assigned to different users, and then limit the interference introduced. Therefore, the parameter~$\alpha$ is often chosen to satisfy~$0 < \alpha \ll 1$.

The $\sinr$ of the signal from data source~$i$ is expressed as
\begin{equation}
\sinri = \frac{a_i}{\sum\limits_{j \neq i} \rho_{j,i} a_j \norm{h_j}^2 + N_0} = \frac{a_i}{\sum\limits_{j \neq i} \frac{\alpha}{N} a_j \norm{h_j}^2 + N_0},
\end{equation}
for any~$i \in \{1,2,\ldots,N\}$. The receiver will attempt to decode the signal from each communication link independently. By Shannon's theorem and~(\ref{ineq:Shannon}), the outcome of decoding signal from source~$i$ is only dependent on~$\sinri$.

\subsubsection{The utility functions of the aggregative game using SUD}
When the population size~$N \to \infty$, if the user~$\theta_i \in M$ chooses~$a_i \in \mathcal{E}$ as its transmission power, its received~$\sinr$ is given as
\begin{align*}
\sinrthetai & = \lim\limits_{N \to \infty} \frac{a_i}{\frac{\alpha}{N} \sum\limits_{j \neq i}  a_j \theta_j + N_0}\\
&=\lim\limits_{N \to \infty} \frac{a_i}{\frac{\alpha \cdot (N-1)}{N} \cdot \frac{1}{N-1} \sum\limits_{j \neq i}  p(\theta_j) \theta_j + N_0}\\
& \xrightarrow[\text{SLLN}]{\text{a.s.}} \frac{a_i}{\alpha \mathbb{E}[p(\theta_j) \theta_j] + N_0}, \ \ \theta_i \in M,
\end{align*}
where almost sure convergence holds by~\thref{assumption:channel_gain} and Kolmogorov's strong law of large numbers.

Then the utility function of any user $\theta_i \in M$ in the case of CDMA can be expressed as
\begin{align}
u (a_i , p, \theta_i) &= \log_2 \left( 1 + \theta_i \cdot \sinrthetai \right) - \beta a_i \nonumber\\
&=\log_2 \left( 1 + \frac{\theta_i a_i}{\alpha \mathbb{E}[p(\theta_j) \theta_j] + N_0} \right) - \beta a_i.
\label{eq:utility_CDMA}
\end{align}

\subsection{Analysis of the CDMA transmission power equilibrium}
Before we attempt to find the best response of player~$\theta_i$, we first define a projection operator on~$\mathbb{R}$ according to Bertsekas and Tsitsiklis~\cite{bertsekas1989parallel}.
\begin{definition}
For any given closed interval $X \subset \mathbb{R}$, define an orthogonal projection operator $P_X: \mathbb{R} \to X$ such that
\begin{equation}
P_X (x) := \argmin\limits_{z \in X} \abs{z-x}, \ \ \forall x \in \mathbb{R}.
\end{equation}
\thlabel{def:projection}
\end{definition}

Given the utility function~(\ref{eq:utility_CDMA}) and a fixed strategy $p \in \mathcal{A}$ of the opponents $M \setminus \{\theta_i\}$, we can obtain the optimal action $a_i^* \in \mathcal{E}$ of the player $\theta_i$ based on the best response operator in~(\ref{def:BR}),
\begin{align}
a_i^* &\in \mathcal{BR}(\theta_i, p) \nonumber\\
&=\bigg\{ P_{\mathcal{E}} \left( \frac{1}{\beta \ln 2} - \frac{\alpha \mathbb{E}[p(\theta_j) \theta_j] + N_0}{\theta_i} \right) \bigg\},\ \ \theta_i \in M.
\label{eq:BR_CDMA}
\end{align}

As the utility function $u(a_i, p, \theta_i)$ of the player with fixed identifier $\theta_i$ and fixed opponents' strategy~$p$ is strictly concave with respect to the power control action $a_i \in \mathcal{E}$, there is a unique maximizer $a_i^*$ of the utility function, as indicated in Theorem 9.17 in Sundaram~\cite{sundaram1996first}. Accordingly, the set of the best response of any player~$\alpha _i \in M$ is a singleton.

Then the space of all strategy profiles that induced a bounded interference term~$\alpha \mathbb{E}[p(\theta_j) \theta_j]$ under the CDMA protocol is a vector space defined as
\begin{align}
L^1(M, \mathbb{R},\nu) &:= \{ p: (p: M \to \mathbb{R}) \ \& \ (\norm{p}_1^{\nu} < \infty) \}.
\label{eq:space_1norm}
\end{align}

\begin{remark}
It is well known that~$L^q$ space is a Banach space for any~$1 \leq q \leq \infty$~\cite{luenberger1997optimization}. The case of weighted norm under a change of measure has been shown in Fischer-Riesz theorem (Theorem~7.18 of~\cite{klenke2013probability}), which applies to our scenario, i.e.,~$\left(L^1(M, \mathbb{R},\nu), \norm{\cdot}_1^{\nu} \right)$ is a Banach space. Details of the proof are omitted.
\thlabel{remark:Lp}
\end{remark}

The operator to be defined below will perform the truncation of any strategies in the space $\left(L^1(M, \mathbb{R},\nu), \norm{\cdot}_1^{\nu} \right)$ to the set of feasible strategies $\mathcal{A}$, due to the power limitations of the circuits of the transmitter.

\begin{definition}
Given the set of feasible strategies~$\mathcal{A} \subset~L^1(M, \mathbb{R},\nu)$, 
 the truncation  operator is 
 the mapping $\mathcal{T}: L^1(M, \mathbb{R},\nu) \to \mathcal{A}$ such that for an arbitrary $p \in L^1(M, \mathbb{R},\nu)$, we have that the resulting strategy profile
\begin{equation}
\tilde{p} := \mathcal{T} (p),
\end{equation}
which satisfies~$\tilde{p} (x) = P_{\mathcal{E}} (p(x))$, $\forall x \in M$.
\thlabel{def:saturation}
\end{definition}

A property of the operator $\mathcal{T}$ is given in the lemma below.
\begin{lemma}
The operator $\mathcal{T}: L^1(M, \mathbb{R},\nu) \to L^1(M, \mathbb{R},\nu)$ is non-expansive, i.e., for two arbitrarily picked elements $p^{(1)}, p^{(2)} \in L^1(M, \mathbb{R},\nu)$, we have
\begin{equation}
\norm{\mathcal{T}\left(p^{(1)}\right) - \mathcal{T} \left( p^{(2)} \right) }_1^{\nu} \leq \norm{p^{(1)} - p^{(2)}}_1^{\nu}.
\label{ineq:nonexpansive}
\end{equation}
\thlabel{lemma:T_nonexpansive}
\end{lemma}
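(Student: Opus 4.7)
The plan is to reduce the claim to a pointwise non-expansiveness property of the scalar projection $P_{\mathcal{E}}$ and then integrate against the measure $\nu$. The key observation is that by \thref{def:saturation}, the operator $\mathcal{T}$ acts pointwise: $\mathcal{T}(p)(x) = P_{\mathcal{E}}(p(x))$ for every $x \in M$. Thus the claim decouples neatly into a pointwise Lipschitz bound followed by monotone integration.

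First, I would establish that $P_{\mathcal{E}}: \mathbb{R} \to \mathcal{E}$ is $1$-Lipschitz, i.e.\ for any $s, t \in \mathbb{R}$,
\begin{equation*}
\lvert P_{\mathcal{E}}(s) - P_{\mathcal{E}}(t) \rvert \leq \lvert s - t \rvert.
\end{equation*}
Because $\mathcal{E} = [E_{\min}, E_{\max}]$ is a closed interval, $P_{\mathcal{E}}$ coincides with the truncation $[\cdot]_{E_{\min}}^{E_{\max}} = \min\{E_{\max}, \max\{E_{\min}, \cdot\}\}$. One can verify the Lipschitz bound by a short case analysis on whether each of $s, t$ lies below $E_{\min}$, inside $\mathcal{E}$, or above $E_{\max}$; in every case the projected gap is at most the unprojected gap. (Alternatively, one could invoke the standard fact that orthogonal projection onto any non-empty closed convex subset of a Hilbert space is non-expansive, specialized to $\mathbb{R}$.)

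Second, I would apply this pointwise inequality with $s = p^{(1)}(x)$ and $t = p^{(2)}(x)$ for each $x \in M$, yielding
\begin{equation*}
\bigl\lvert \mathcal{T}(p^{(1)})(x) - \mathcal{T}(p^{(2)})(x) \bigr\rvert \leq \bigl\lvert p^{(1)}(x) - p^{(2)}(x) \bigr\rvert, \quad \forall x \in M.
\end{equation*}
Multiplying by the non-negative weight $w(x) = x f(x)$ (cf.\ (\ref{eq:weight})) preserves the inequality, and integrating both sides over $M$ with respect to the Lebesgue measure $\lambda$ gives, in the notation of \thref{def:weighted_1norm},
\begin{equation*}
\norm{\mathcal{T}(p^{(1)}) - \mathcal{T}(p^{(2)})}_1^{\nu} \leq \norm{p^{(1)} - p^{(2)}}_1^{\nu},
\end{equation*}
which is exactly (\ref{ineq:nonexpansive}). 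A small side remark to include is that $\mathcal{T}(p) \in L^1(M, \mathbb{R}, \nu)$ indeed holds whenever $p \in L^1(M, \mathbb{R}, \nu)$, since $\mathcal{T}(p)$ is $\mathcal{E}$-valued and hence bounded by $E_{\max}$, so $\norm{\mathcal{T}(p)}_1^{\nu} \leq E_{\max} \cdot \nu(M) = E_{\max} \cdot \mathbb{E}[\norm{h}^2] < \infty$ by \thref{assumption:channel_gain}; this ensures the codomain statement $\mathcal{T}: L^1(M, \mathbb{R}, \nu) \to L^1(M, \mathbb{R}, \nu)$ is well posed.

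There is no real obstacle here; the only mild care point is confirming measurability of $\mathcal{T}(p)$, which follows because $P_{\mathcal{E}}$ is continuous and hence Borel-measurable, so its composition with the measurable function $p$ is measurable. Everything else is a direct consequence of the convexity of $\mathcal{E}$ and the pointwise structure of the weighted norm.
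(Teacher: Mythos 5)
Your proposal is correct and follows essentially the same route as the paper's proof: reduce to the pointwise non-expansiveness of the scalar projection $P_{\mathcal{E}}$ (which the paper obtains by citing the projection theorem of Bertsekas and Tsitsiklis, just as your parenthetical alternative suggests) and then integrate against the weight $w(x)$. Your added remarks on measurability and on $\mathcal{T}(p)$ lying in $L^1(M,\mathbb{R},\nu)$ are harmless extras that the paper omits.
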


\begin{proof}
This lemma is a direct extension of the projection theorem in Euclidean space~\cite{bertsekas1989parallel}. For any $p^{(1)}, p^{(2)} \in L^1(M, \mathbb{R},\nu)$, we can obtain 
\begin{align*}
& \ \ \ \ \norm{\mathcal{T}\left( p^{(1)} \right) - \mathcal{T}\left( p^{(2)} \right)}_1^{\nu}\\
& = \int\limits_{x \in M} \abs{\mathcal{T}\left( p^{(1)} \right) (x) - \mathcal{T}\left( p^{(2)} \right) (x)} w(x) d \lambda(x).
\end{align*}
Moreover, based on \thref{def:saturation} as well as the non-expansive property of the projection operator $P_{\mathcal{E}}$ as given in the projection theorem (i.e., Proposition 3.2 by Bertsekas and Tsitsiklis~\cite{bertsekas1989parallel}), for any fixed $x \in M$, we have
\begin{align*}
& \ \ \ \ \abs{\mathcal{T}\left( p^{(1)} \right) (x) - \mathcal{T}\left( p^{(2)}\right) (x) }\\
& = \abs{P_{\mathcal{E}} (p^{(1)} (x)) - P_{\mathcal{E}} (p^{(2)}(x))} \leq \abs{p^{(1)}(x) - p^{(2)}(x)}.
\end{align*}

Hence, based on the definition of the weighted $L^1$ norm, we can directly obtain
\begin{align*}
& \ \ \ \int\limits_{x \in M} \abs{\mathcal{T}\left( p^{(1)} \right) (x) - \mathcal{T}\left( p^{(2)} \right) (x)} w(x) d \lambda(x)\\
& \leq \int\limits_{x \in M} \abs{p^{(1)}(x) - p^{(2)}(x)} w(x) d \lambda(x),
\end{align*}
which is equivalent to~(\ref{ineq:nonexpansive}).
\end{proof}

Now we establish the existence and uniqueness of mean-field equilibrium in the case of CDMA communication protocol.
\begin{theorem}
Assume $\alpha < 1$. Then for CDMA, there exists a unique mean-field equilibrium~$p^* \in \mathcal{A} \subset L^1(M, \mathbb{R},\nu)$ with single user detection and utility function~(\ref{eq:utility_CDMA}). Moreover, starting from any initial strategy~$p_0 \in \mathcal{A}$, the unique mean-field equilibrium~$p^*$ can be obtained through a recursive update based on the best response operator, i.e., $\lim\limits_{k \to \infty} p_k = p^* \in L^1(M, \mathbb{R},\nu)$, where $p_{k+1} (\theta_i) \in \mathcal{BR}(\theta_i, p_k)$ for any~$\theta_i \in M$ and $k \geq 0$.
\thlabel{THM:CDMA_contraction_mapping}
\end{theorem}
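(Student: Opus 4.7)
The plan is to prove \thref{THM:CDMA_contraction_mapping} via the Banach fixed point theorem applied to the best-response update map on the Banach space $\left(L^1(M,\mathbb{R},\nu), \norm{\cdot}_1^{\nu}\right)$ (which is Banach by \thref{remark:Lp}). First I would formalize the iteration operator $\mathcal{F}: \mathcal{A} \to \mathcal{A}$ induced by (\ref{eq:BR_CDMA}), namely
\begin{equation*}
\mathcal{F}(p)(\theta_i) \;:=\; P_{\mathcal{E}}\!\left( \tfrac{1}{\beta \ln 2} - \tfrac{\alpha\,\mathbb{E}[p(\theta_j)\theta_j] + N_0}{\theta_i} \right), \qquad \theta_i \in M,
\end{equation*}
which is well-defined and single-valued because the utility (\ref{eq:utility_CDMA}) is strictly concave in $a_i$. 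Writing $\mathcal{F}(p) = \mathcal{T}(\Phi(p))$ where $\Phi(p)(\theta_i) := \tfrac{1}{\beta\ln 2} - \tfrac{\alpha \mathbb{E}[p(\theta_j)\theta_j] + N_0}{\theta_i}$, \thref{lemma:T_nonexpansive} reduces everything to controlling $\norm{\Phi(p^{(1)}) - \Phi(p^{(2)})}_1^{\nu}$.

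The central calculation is a contraction estimate with constant $\alpha$. For $p^{(1)}, p^{(2)} \in \mathcal{A}$,
\begin{equation*}
\Phi(p^{(1)})(\theta_i) - \Phi(p^{(2)})(\theta_i) \;=\; -\tfrac{\alpha}{\theta_i}\bigl(\mathbb{E}[p^{(1)}(\theta)\theta] - \mathbb{E}[p^{(2)}(\theta)\theta]\bigr),
\end{equation*}
so, using the definition of $w$ in (\ref{eq:weight}) and the identity $\tfrac{1}{\theta_i} w(\theta_i) = f(\theta_i)$,
\begin{equation*}
\norm{\Phi(p^{(1)}) - \Phi(p^{(2)})}_1^{\nu} \;=\; \alpha\, \bigl|\mathbb{E}[(p^{(1)}-p^{(2)})(\theta)\,\theta]\bigr| \int_M f(\theta_i)\,d\lambda(\theta_i).
\end{equation*}
The integral equals $1$, and the expectation on the right is bounded in modulus by $\int_M |p^{(1)}(\theta)-p^{(2)}(\theta)|\,\theta f(\theta)\,d\lambda(\theta) = \norm{p^{(1)}-p^{(2)}}_1^{\nu}$. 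Composing with the non-expansive $\mathcal{T}$ then yields
\begin{equation*}
\norm{\mathcal{F}(p^{(1)}) - \mathcal{F}(p^{(2)})}_1^{\nu} \;\leq\; \alpha \,\norm{p^{(1)} - p^{(2)}}_1^{\nu},
\end{equation*}
which is a strict contraction since $\alpha < 1$.

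To invoke Banach's fixed point theorem, I next verify that $\mathcal{A}$ is a closed (hence complete) subset of $\left(L^1(M,\mathbb{R},\nu), \norm{\cdot}_1^\nu\right)$: a $\norm{\cdot}_1^\nu$-limit of $\nu$-a.e.\ $\mathcal{E}$-valued functions has a subsequence converging $\nu$-a.e., so the limit is $\nu$-a.e.\ $\mathcal{E}$-valued, and $\mathcal{F}$ clearly maps $\mathcal{A}$ into $\mathcal{A}$ because of the projection onto $\mathcal{E}$. Banach's theorem then gives a unique fixed point $p^* \in \mathcal{A}$ with $\mathcal{F}(p^*) = p^*$, which by (\ref{eq:BR_CDMA}) and \thref{def:MFE} is precisely the unique mean-field equilibrium, together with the convergence claim $\lim_{k\to\infty} p_k = p^*$ in $L^1(M,\mathbb{R},\nu)$ from any initialization $p_0 \in \mathcal{A}$.

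The main technical hurdle I anticipate is the weighted integral bookkeeping: one must be careful that the contraction constant is exactly $\alpha$ rather than something depending on $\mathbb{E}[\theta]$ or $E_{\max}$, and this works only because the factor $1/\theta_i$ coming from differentiating the interference term in the best response cancels the $\theta_i$ in the weight $w(\theta_i) = \theta_i f(\theta_i)$. This cancellation is precisely what motivated introducing the weighted measure $\nu$ in (\ref{eq:new_measure}); any other natural norm (for example $L^\infty$ or unweighted $L^1$) would either fail to be controlled uniformly in $\theta_i$ (because $1/\theta_i$ is unbounded for $\theta_i$ near $0$) or would lose the clean constant $\alpha$. Everything else, including measurability of $\mathcal{F}(p)$ and well-definedness of the interference expectation via \thref{assumption:channel_gain} and boundedness of $p \in \mathcal{A}$, is routine.
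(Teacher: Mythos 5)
Your proposal is correct and follows essentially the same route as the paper's own proof: the composite operator $\mathcal{T}\circ\overline{\mathcal{BR}}$ (your $\mathcal{F}=\mathcal{T}\circ\Phi$) is shown to be an $\alpha$-contraction on $\left(L^1(M,\mathbb{R},\nu),\norm{\cdot}_1^{\nu}\right)$ via the non-expansiveness of the projection and the exact cancellation of $1/\theta_i$ against the weight $w(\theta_i)=\theta_i f(\theta_i)$, followed by closedness of $\mathcal{A}$ and the Banach fixed point theorem. The weighted-norm bookkeeping, the bound $\abs{\mathbb{E}[(p^{(1)}-p^{(2)})\theta]}\leq\norm{p^{(1)}-p^{(2)}}_1^{\nu}$, and the a.e.-subsequence argument for closedness all match the paper's Appendix A.
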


\begin{proof}
See Appendix~\ref{sec:proof_thm1}.
\end{proof}

\begin{remark}
Note that the assumption~$\alpha < 1$ can be interpreted as requiring the length of the signature sequence~$n_s$ to be greater than the number of users~$N$. In other words, there is some redundancy in the number of sub-carries (i.e., code chips) in the implementation of spread spectrum technique. This assumption is commonly satisfied in the practical implementation of CDMA protocol, as it is often required that $n_s \gg N$ to enhance robustness of orthogonality against time offsets among signal waveforms from different users or to accommodate possible incoming users.
\end{remark}

Moreover, we also obtain the result on the continuity and monotonicity of the equilibrium strategy profile under the CDMA protocol.

\begin{corollary}
The mean-field equilibrium strategy $p^*: M \to \mathcal{E}$ for CDMA is continuous and monotonically increasing with respect to the identifier $\theta_i \in M$.
\thlabel{coro:CDMA_strategy}
\end{corollary}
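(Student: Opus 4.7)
The plan is to exploit the characterization of the equilibrium as a fixed point of the best-response operator from~(\ref{eq:BR_CDMA}), which collapses, once we substitute $p^*$, into a purely one-variable expression in $\theta_i$. Concretely, by \thref{THM:CDMA_contraction_mapping} there is a unique equilibrium $p^* \in \mathcal{A}$, and since $p^*$ takes values in $\mathcal{E} = [E_{\min}, E_{\max}]$ with $E_{\max} < \infty$, \thref{assumption:channel_gain} gives
\begin{equation*}
C^* := \alpha\, \mathbb{E}[p^*(\theta_j)\theta_j] + N_0 \le \alpha E_{\max}\, \mathbb{E}[\norm{h_j}^2] + N_0 < \infty,
\end{equation*}
so $C^*$ is a well-defined positive constant. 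The fixed-point equation then reads
\begin{equation*}
p^*(\theta) \;=\; P_{\mathcal{E}}\!\left(\frac{1}{\beta \ln 2} - \frac{C^*}{\theta}\right) \;=\; \bigl[g(\theta)\bigr]_{E_{\min}}^{E_{\max}}, \qquad \theta \in M,
\end{equation*}
where $g(\theta) := \tfrac{1}{\beta \ln 2} - \tfrac{C^*}{\theta}$.

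Next I would analyze the two building blocks separately. The inner map $g$ is defined on $M \subset \mathbb{R}_{++}$, where $\theta > 0$ uniformly, so $g$ is continuous on $M$; differentiating gives $g'(\theta) = C^*/\theta^2 > 0$, which yields strict monotonicity (hence, a fortiori, monotone increase) in $\theta$. The outer truncation $P_{\mathcal{E}}(x) = [x]_{E_{\min}}^{E_{\max}}$ is continuous on $\mathbb{R}$ (it is $1$-Lipschitz by the non-expansiveness argument already used in \thref{lemma:T_nonexpansive}) and is non-decreasing on $\mathbb{R}$, since it is the composition of the two monotone non-decreasing maps $x \mapsto \max\{E_{\min},x\}$ and $x \mapsto \min\{E_{\max}, x\}$. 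Composing, $p^* = P_{\mathcal{E}} \circ g$ is continuous and non-decreasing in $\theta$, which is the claimed monotonicity (note that strict monotonicity cannot hold in general, because $p^*$ saturates at $E_{\min}$ for small $\theta$ and at $E_{\max}$ for large $\theta$; "monotonically increasing" must therefore be read in the weak sense).

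The only subtlety worth flagging is the well-definedness of the scalar $C^*$: a priori $p^*$ is only an equivalence class in $L^1(M,\mathbb{R},\nu)$, so one might worry that the identity $p^*(\theta) = P_{\mathcal{E}}(g(\theta))$ holds merely $\nu$-a.e. However, the right-hand side is a continuous function of $\theta$ on $M$, so it provides a canonical continuous representative of $p^*$, and the statement of the corollary is naturally interpreted with respect to this representative. This is where most of the subtlety of the argument sits; the rest reduces to the elementary monotonicity/continuity calculation above, so I do not anticipate genuine obstacles.
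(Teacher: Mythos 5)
Your proof is correct, but it takes a genuinely different route from the paper's. The paper never writes down the closed form of the equilibrium: for continuity it argues that the best response $\mathcal{BR}(\theta_i,p^*)$ is a singleton by strict concavity and then invokes the maximum theorem (Theorem~9.17 in Sundaram) to get continuity of the single-valued maximizer in the parameter $\theta_i$; for monotonicity it verifies that $u(a_i,p,\theta_i)$ has strictly increasing differences in $(a_i,\theta_i)$ and appeals to monotone comparative statics (Theorem~10.6 in Sundaram). You instead exploit the structural fact that under CDMA the mean-field interference $\alpha\,\mathbb{E}[p^*(\theta_j)\theta_j]+N_0$ is a single scalar $C^*$ independent of $\theta_i$, so the fixed-point identity collapses to $p^*(\theta)=P_{\mathcal{E}}\bigl(\tfrac{1}{\beta\ln 2}-\tfrac{C^*}{\theta}\bigr)$, and both properties follow from elementary calculus plus the monotonicity and $1$-Lipschitz continuity of the truncation. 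Your version is more elementary and yields strictly more (an explicit formula, strict monotonicity off the saturation regions, the location of the cutoff threshold used later in \thref{prop:cross}), and your handling of the $\nu$-a.e.\ equivalence-class issue via the canonical continuous representative is a point the paper glosses over. What the paper's machinery buys in exchange is robustness: the maximum-theorem argument does not depend on having a closed-form best response, and it is exactly the argument reused for continuity under NOMA (\thref{coro:NOMA_strategy}), where the interference $\alpha\,\mathbb{E}[p^*(\theta_j)\theta_j\bm{1}_{\{\theta_j<\theta_i\}}]$ does depend on $\theta_i$ and your one-constant reduction would no longer apply without first establishing regularity of that map. One small point to make explicit in your write-up: strict positivity of $C^*$ (needed for $g'>0$) rests on $N_0>0$, which the paper assumes implicitly throughout.
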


\begin{proof}
See Appendix~\ref{sec:proof_coro1}.
\end{proof}

\begin{remark}
Since our model uses an infinite number of users to approximate the behavior of a large but finite user population, it is natural to raise the question on how accurate this approximation is. We take CDMA for a finite population with~$N$ users as an instance. Then we define the mean-field interference as the average receiving power level of the user population over the shared Gaussian channel, i.e.,~$z := (1/N) \sum_{j=1}^N p(\theta_j) \theta_j$, which is identical for all channel users. On the other hand, each user~$\theta_i$ chooses its optimal action in response to the interference~$z$, i.e., $p(\theta_i) = a_i^* = \argmax_{a_i \in \mathcal{E}} u(a_i,p,\theta_i) = \frac{1}{\beta \ln2} - \frac{\alpha z + N_0}{\theta_i}$, where we assume the set of feasible power levels~$\mathcal{E}$ is sufficiently large such that no truncation is performed. Hence, by calculating the fixed point of the mean-field term~$z$, a closed-form expression of the equilibrium strategy can be obtained as~$p^*(\theta_i) := \frac{1}{\beta \ln2} - \frac{1}{\theta_i}[\frac{N_0}{1+\alpha} + \frac{\alpha}{(1+\alpha) \beta \ln2} (\frac{1}{N} \sum_{i=1}^N \norm{h_i}^2)]$ for~$i=1,2,\ldots,N$. As $N \to \infty$, based on the strong law of large number, we have $\frac{1}{N} \sum_{i=1}^N \norm{h_i}^2 \xrightarrow[]{\text{a.s.}} \mathbb{E}[\norm{h_i}^2]$. According to Theorem~2.5.7 in~\cite{durrett2010probability}, the speed of almost sure convergence is faster than~$N^{-0.5} (\log N)^{0.5+\epsilon}$ for any~$\epsilon>0$.
\end{remark}

\section{NOMA transmission power game}
\label{sec:NOMA}
\subsection{Multi-packet reception (MPR) in NOMA}
\subsubsection{Power-domain NOMA and SIC decoding}
For power-domain NOMA, we still adopt the spread-spectrum technique. For convenience and fairness of social welfare comparison between CDMA and NOMA, the same set of signature sequences $\{\bm{s}_j\}_{j=1}^N$ is allocated to the users, hence $0< \alpha \leq 1$. The main difference between CDMA and NOMA is reflected in the specific decoding algorithm adopted by the receiver.

Now we explicitly give the procedures of SIC decoding algorithm for power-domain NOMA. This requires a proper determination of the decoding order of signals from different sources. Typically, the order of decoding is affected by the realization of the channel gains $\norm{h_i}^2$, as indicated in Vaezi et al.~\cite{vaezi2018non} and Xia et al.~\cite{xia2018outage}. The outcome of decoding is based on the $\sinr$ and is determined by Shannon's theorem as given in~(\ref{ineq:Shannon}). The $\sinr$ level of each data source in the received signal is presented below.

Consider the decoding order of the signals as a vector~$\bm{v}:=~(v_1,v_2,\ldots,v_N)$, where each index~$v_j \in~\{1,2,\ldots,N\}$ is distinct. For conventional SIC schemes, the recursive decoding continues if and only if all the previous decoding procedures are successful. From the perspective of the signal source $v_i$ which is currently being decoded, the set of signal sources which has been successfully decoded and canceled from received waveform $y[k]$ is $I_d(v_i)=\{v_1, v_2, \ldots, v_{i-1} \}$. Then the $\sinr$ of each signal source upon the decoding procedure can be expressed as
\begin{align}
\sinrvi & = \frac{p_{v_i}}{\sum\limits_{j \in \{1,2,\ldots,N\} / \{v_i\} / I_d(v_i)} \rho_{j, v_i} a_j \norm{h_j}^2 + N_0} \nonumber\\
& = \frac{p_{v_i}}{\sum\limits_{j \in \{1,2,\ldots,N\} / \{v_i\} / I_d(v_i)} \frac{\alpha}{N} a_j \norm{h_j}^2 + N_0}.
\end{align}

If the signal from source $m_i$ is decoded successfully from the received waveform $y[k]$, i.e., $\gamma^{(m_i)}=1$, the SIC decoding will proceed to the one with the largest channel gain $\norm{h_{m_{i+1}}}^2$ among remaining signal sources; otherwise, SIC terminates and all the signals from the remaining sources in the received waveform are dropped.

In order to extract information more effectively from the received waveform $y[k]$, an improved version of SIC is proposed by Xia et al.~\cite{xia2018outage} for uplink transmission such that the base station attempts to decode the remaining users' information even when failures have happened. In other words, the signal that was not successfully decoded will be treated as interference in the subsequent decoding procedures. In this case, the successful decoding set from the perspective of user $v_i$ is expressed as
\begin{equation}
\tilde{I}_d (v_i) = \{v_j : 1 \leq j \leq i-1 \text{ and } \gamma^{(j)} = 1\}
\end{equation}
and the decoding procedure terminates after all data sources~$i \in~\{1,2,\ldots,N\}$ have been attempted for decoding. We base our analysis on this improved version of SIC~\cite{xia2018outage}.

The implementation issues of the SIC decoding algorithm is also of concern. Though the complexity introduced by SIC cannot be ignored~\cite{vaezi2018non}, the online implementation of SIC under the power-domain NOMA considered in this paper is feasible due to the following reasons. First, recent improvements in the computational capability have enabled the implementations of SIC at user equipments~\cite{vaezi2018non}, such as the network-assisted interference cancellation and suppression (NAICS) terminals investigated by Zhou et al.~\cite{zhou2014network}. Second, the non-cooperative channel access model corresponds to the scenario of a micro base station serving a moderate user population size, typically around a hundred users. For such a user population, it is feasible to adopt SIC decoding during online implementation.

In practice, the base station senses the channel gain of the communication link from each user in advance, and it is relatively static since the users lack mobility. Hence, the base station can determine the decoding order of each user in advance and broadcast the channel gains as well as the decoding order to all potential users during initialization. This fixed decoding order assumption for SIC has also been employed by Wei et al.~\cite{wei2018performance}.


\begin{remark}
For convenience of analysis, we have assumed that a fixed decoding order (i.e., the descending order of the channel gain) is employed at the base station. However, from the perspective of a practical implementation, as the channel users are non-cooperative decision makers, it is in general difficult to regulate different users to choose their actions based on the predetermined SIC decoding order broadcasted by the base station. The requirement that each channel user will automatically adopt the decoding order broadcasted by the base station can be interpreted as a communication protocol programmed into the user equipments.

	On the other hand, NOMA with a fixed SIC decoding order can be considered as a benchmark for possible performance achievable by NOMA. Specifically, if an optimal decoding order for NOMA exists, it will perform no worse than the benchmark. Hence, we can without loss of generality analyse NOMA with a fixed decoding order.
			
\end{remark}

\subsubsection{The utility functions for the aggregative game using SIC}

To determine the utility function of each uplink user, we need to obtain the expression of $\sinr$ of each received signal sources at the base station. Due to the recursive nature of SIC, it is necessary to determine the decoding order before giving the expressions of $\sinr$. In this paper, we restrict our consideration to the case where the SIC algorithm at the base station follows the descending order of the squared norm of the channel gain.

For simplicity of analysis, we assume that at each step of SIC, the interference caused by users decoded prior to this step is perfectly canceled regardless of their decoding outcomes, which is similar to the model in Chen et al.~\cite{chen2017optimization}. Thus the \rm SINR is approximated by
\begin{equation}
\overline{\sinri}= \frac{a_i}{\frac{\alpha}{N} \sum\limits_{j \neq i} a_j \norm{h_j}^2 \cdot \bm{1}_{\{\norm{h_j}^2 < \norm{h_i}^2 \}} + N_0}.
\label{eq:SINR_aggre_NOMA1}
\end{equation}

\begin{remark}
Though we have assumed that each user enjoys a perfect cancellation of interference from previously decoded users during SIC, this does not hold true in general as the receiving $\sinr$ of a portion of users can fail to meet condition~(\ref{ineq:Shannon}) for successful decoding. The model we consider here is an approximation, which provides a straightforward characterization of the features of NOMA without introducing much complexity in modelling. Fortunately, if there are more general models where the interference faced by each user during SIC can be characterized explicitly, the social welfare comparison will remain valid as long as equilibrium strategies exist.
\end{remark}

Similar to the case of CDMA, when the number of players~$N \to \infty$ under the NOMA protocol, the~$\sinr$ of the received signal from player~$\theta_i \in M$ is expressed as
\begin{align*}
\overline{\sinr} (p,\theta_i) & = \lim\limits_{N \to \infty} \frac{a_i}{\frac{\alpha}{N} \sum\limits_{j \neq i}  a_j \theta_j \bm{1}_{\{\theta_j < \theta_i \}} + N_0}\\
& \xrightarrow[\text{SLLN}]{\text{a.s.}} \frac{a_i}{\alpha \mathbb{E}[p(\theta^*) \theta^* \bm{1}_{\{\theta^* < \theta_i \}}] + N_0}, \ \ \theta_i \in M,
\end{align*}
where the expectation is taken with respect to the random variable~$\theta^*$ following the distribution~$f(x)$, and Kolmogorov's strong law of large number holds since by \thref{assumption:channel_gain},
\begin{equation*}
\mathbb{E}[p(\theta^*) \theta^* \bm{1}_{\{\theta^* < \theta_i\}}] \leq \mathbb{E} [p(\theta^*) \theta^*] \leq E_{\max} \overline{h}^2< \infty.
\end{equation*}

The utility function of any user $\theta_i \in M$ under the NOMA protocol is
\begin{align}
& \ \ u (a_i , p, \theta_i) =\log_2 \left( 1 + \theta_i \overline{\sinr} (p,\theta_i) \right) - \beta a_i \nonumber\\
&=\log_2 \left( 1 + \frac{\theta_i a_i}{\alpha \mathbb{E}[p(\theta^*) \theta^* \bm{1}_{\{\theta^* < \theta_i\}}] + N_0} \right) - \beta a_i.
\label{eq:utility_NOMA}
\end{align}

\begin{remark}
The determination of the interference $\alpha \mathbb{E}[p(\theta^*) \theta^* \bm{1}_{\{\theta^* < \theta_i\}}]$ faced by each user $\theta_i \in M$ is equivalent to the identification of the decoding order of that specific user, as the interference is the only term in the utility function impacted by the decoding order. According to the strong law of large number, the interference from an infinite number of users will converge almost surely to the term $\alpha \mathbb{E}[p(\theta^*) \theta^* \bm{1}_{\{\theta^* < \theta_i\}}]$. Hence, each user $\theta_i \in M$ can determine the percentage of users that complete decoding before its turn as long as the PDF $f(\theta_i), \ (\theta_i \in M)$ is known.
\end{remark}

\subsection{Analysis of the NOMA transmission power equilibrium}

Player $\theta_i$'s optimal action against opponents' aggregate actions is given by the best response operator~(\ref{def:BR}), i.e.,~$\forall \theta_i \in M$, there is
\begin{align}
a_i^* &\in \mathcal{BR}_{\text{ordered}} (\theta_i, p) \nonumber\\
&=\bigg\{ P_{\mathcal{E}} \left( \frac{1}{\beta \ln 2} - \frac{\alpha \mathbb{E}[p(\theta^*) \theta^* \bm{1}_{\{\theta^* < \theta_i\}}] + N_0}{\theta_i} \right) \bigg\}.
\label{eq:BR_NOMA2}
\end{align}

Again, we are going to adopt the Banach fixed point theorem to establish the existence and uniqueness of the mean-field equilibrium strategy $p^*$.

According to~\thref{def:ess_supremum_norm}, a vector space consisting of all strategies with bounded power allocation is defined as
\begin{equation}
L^{\infty}(M, \mathbb{R},\nu):=\{ p: (p:M \to \mathbb{R}) \ \& \ (\norm{p}_{\infty}^{\nu} < \infty) \}.
\end{equation}

As explained in \thref{remark:Lp}, the space~$\left(L^{\infty}(M, \mathbb{R},\nu), \norm{\cdot}_{\infty}^{\nu} \right)$ is a Banach space. The set of feasible strategies $\mathcal{A}$ is a subset of $L^{\infty}(M, \mathbb{R},\nu)$.

Now we give the main result concerning the equilibrium strategy for NOMA, which establishes the existence and uniqueness of equilibrium strategy.
\begin{theorem}
Assume $\alpha < 1$. Then for the game~$\mathcal{G}$ which adopts utility function~(\ref{eq:utility_NOMA}) and NOMA with SIC decoding by descending order of the channel gains~$\norm{h_i}^2$, there exists a unique mean-field equilibrium $p^*_{\text{ordered}} \in \mathcal{A} \subset L^{\infty}(M, \mathbb{R},\nu)$, and the utility function of each player is given by~(\ref{eq:utility_NOMA}). Moreover, starting from any initial strategy~$p_0 \in \mathcal{A}$, the unique mean-field equilibrium~$p^*_{\text{ordered}}$ can be obtained through a recursive update based on the best response operator, i.e., $\lim\limits_{k \to \infty} p_k = p^*_{\text{ordered}} \in L^{\infty}(M, \mathbb{R},\nu)$, where $p_{k+1} (\theta_i) \in \mathcal{BR}_{\text{ordered}} (\theta_i, p_k)$ for any~$\theta_i \in M$ and $k \geq 0$.
\thlabel{THM:NOMA2_contraction_mapping}
\end{theorem}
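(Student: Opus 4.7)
The plan is to mirror the strategy that worked for \thref{THM:CDMA_contraction_mapping}: express the best-response dynamics as a self-map $\Phi$ on the Banach space $\bigl(L^\infty(M,\mathbb{R},\nu), \norm{\cdot}_\infty^\nu\bigr)$, show that it is a contraction, and invoke the Banach fixed point theorem. The decisive difference from the CDMA case is that under SIC the mean-field aggregate faced by user $\theta_i$,
\[
G(p,\theta_i) := \mathbb{E}\bigl[p(\theta^*)\theta^* \bm{1}_{\{\theta^* < \theta_i\}}\bigr],
\]
depends on the player's own identifier through the indicator, so it cannot be collapsed to a single scalar as for CDMA. This is precisely why I would work in the $L^\infty$ space (with respect to $\nu$) rather than the $L^1$ space used in the proof of \thref{THM:CDMA_contraction_mapping}.

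First I would define the best-response map $\Phi:\mathcal{A}\to\mathcal{A}$ by
\[
\Phi(p)(\theta_i) := P_\mathcal{E}\Bigl(\tfrac{1}{\beta\ln 2} - \tfrac{\alpha G(p,\theta_i)+N_0}{\theta_i}\Bigr),
\]
which automatically lands in $\mathcal{A}$ via the truncation operator of \thref{def:saturation}. For any $p^{(1)},p^{(2)} \in \mathcal{A}$ and any $\theta_i \in M$, non-expansiveness of $P_\mathcal{E}$ (as in \thref{lemma:T_nonexpansive}) gives
\[
\bigl|\Phi(p^{(1)})(\theta_i) - \Phi(p^{(2)})(\theta_i)\bigr| \leq \frac{\alpha}{\theta_i}\bigl|G(p^{(1)},\theta_i) - G(p^{(2)},\theta_i)\bigr|,
\]
and recognizing the expectation inside $G$ as an integral against $\nu$ yields
\[
\bigl|G(p^{(1)},\theta_i) - G(p^{(2)},\theta_i)\bigr| \leq \int_{M\cap[0,\theta_i)} |p^{(1)}-p^{(2)}|\,d\nu \leq \norm{p^{(1)}-p^{(2)}}_\infty^\nu \cdot \nu\bigl(M\cap[0,\theta_i)\bigr).
\]

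The crux of the argument is the bound $\nu(M\cap[0,\theta_i))/\theta_i \leq P(M\cap[0,\theta_i)) \leq 1$, which follows at once from $x<\theta_i$ on the integration region in $d\nu(x)=xf(x)d\lambda(x)$. Substituting this into the previous display and taking the essential supremum over $\theta_i$ produces the uniform estimate
\[
\norm{\Phi(p^{(1)}) - \Phi(p^{(2)})}_\infty^\nu \leq \alpha \, \norm{p^{(1)}-p^{(2)}}_\infty^\nu,
\]
which is a genuine contraction because $\alpha<1$. The Banach fixed point theorem (applicable since the weighted $L^\infty$ space is complete by \thref{remark:Lp}) then supplies both the unique mean-field equilibrium $p^*_{\text{ordered}}\in\mathcal{A}$ and the geometric convergence $p_k \to p^*_{\text{ordered}}$ from any initial $p_0\in\mathcal{A}$ under the iteration $p_{k+1}=\Phi(p_k)$.

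The main obstacle is the $\theta_i$-dependence of $G(p,\theta_i)$, which rules out a direct $L^1_\nu$ contraction argument of the CDMA type: the interference is now a function rather than a constant. The resolution is the structural cancellation highlighted above: the $\theta_i$ in the denominator of the best response is precisely neutralized by the fact that the integration region $[0,\theta_i)$ forces $x<\theta_i$ in the weight of $\nu$. This is the only place where the CDMA pattern must be genuinely modified; the remaining ingredients (non-expansiveness of $P_\mathcal{E}$, completeness of the weighted Lebesgue space, applicability of the SLLN to the interference) transfer essentially unchanged.
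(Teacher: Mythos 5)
Your proposal is correct and takes essentially the same route as the paper's proof: a contraction argument for the truncated best-response map on the Banach space $\left(L^{\infty}(M,\mathbb{R},\nu), \norm{\cdot}_{\infty}^{\nu}\right)$, with the identical key estimate $\int_{M \cap (0,\theta_i]} y f(y)\,dy \leq \theta_i \, P\left(M \cap (0,\theta_i]\right) \leq \theta_i$ cancelling the $\theta_i$ in the denominator and yielding the contraction modulus $\alpha < 1$. The only cosmetic difference is that the paper separately verifies closedness of $\mathcal{A}$ to locate the fixed point there, whereas you observe that the fixed point is automatically $\mathcal{E}$-valued as an image of $P_{\mathcal{E}}$; both are adequate.
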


\begin{proof}
See Appendix~\ref{sec:proof_thm2}.
\end{proof}

\begin{corollary}
For the game $\mathcal{G}$ adopting NOMA, the unique equilibrium strategy $p^*_{\text{ordered}}: M \to \mathcal{E}$ characterized in~\thref{THM:NOMA2_contraction_mapping} is continuous with respect to $\theta_i \in M$.
\thlabel{coro:NOMA_strategy}
\end{corollary}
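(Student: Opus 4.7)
The plan is to exploit the fixed-point characterization of $p^*_{\text{ordered}}$ supplied by \thref{THM:NOMA2_contraction_mapping}. At the equilibrium, the best response formula (\ref{eq:BR_NOMA2}) gives, for every $\theta_i \in M$,
\begin{equation*}
p^*_{\text{ordered}}(\theta_i) = P_{\mathcal{E}}\!\left( \frac{1}{\beta \ln 2} - \frac{\alpha F(\theta_i) + N_0}{\theta_i} \right),
\end{equation*}
where $F(\theta_i) := \mathbb{E}\bigl[ p^*_{\text{ordered}}(\theta^*)\,\theta^*\,\bm{1}_{\{\theta^* < \theta_i\}} \bigr]$. Since the right-hand side is itself a Lebesgue-measurable function of $\theta_i$, I would first agree that this expression defines a canonical representative of the equivalence class of $p^*_{\text{ordered}}$ in $L^{\infty}(M,\mathbb{R},\nu)$, and then argue that this representative is continuous on $M$.

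The heart of the argument is the continuity of $F$. Write $F(\theta_i) = \int_{M \cap (0,\theta_i)} p^*_{\text{ordered}}(\theta^*)\, \theta^*\, f(\theta^*)\, d\lambda(\theta^*)$. Since $p^*_{\text{ordered}}$ takes values in $\mathcal{E} = [E_{\min}, E_{\max}]$, the integrand is dominated pointwise by $E_{\max}\, \theta^* f(\theta^*)$, which is Lebesgue-integrable by \thref{assumption:channel_gain}. For any $\theta_i, \theta_i' \in M$ with $\theta_i < \theta_i'$,
\begin{equation*}
\lvert F(\theta_i') - F(\theta_i)\rvert \leq E_{\max} \int_{\theta_i}^{\theta_i'} \theta^* f(\theta^*)\, d\lambda(\theta^*),
\end{equation*}
and the right-hand side tends to $0$ as $\theta_i' \to \theta_i$ by absolute continuity of the Lebesgue integral (equivalently, by dominated convergence applied to $\bm{1}_{(0,\theta_i')} \to \bm{1}_{(0,\theta_i]}$; the boundary atom has zero $f$-mass since $f$ is a density). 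Hence $F$ is continuous (in fact absolutely continuous) on $M$.

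Given continuity of $F$, the map $\theta_i \mapsto \frac{1}{\beta \ln 2} - \frac{\alpha F(\theta_i) + N_0}{\theta_i}$ is continuous on $M \subset \mathbb{R}_{++}$, because $\theta_i \mapsto 1/\theta_i$ is continuous on $\mathbb{R}_{++}$ and the numerator is continuous. Composing with the projection $P_{\mathcal{E}}$, which is continuous (non-expansive, as recorded in \thref{lemma:T_nonexpansive}), yields continuity of $p^*_{\text{ordered}}$.

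The only step I expect any friction on is ensuring that the strict-inequality indicator $\bm{1}_{\{\theta^* < \theta_i\}}$ does not produce a jump discontinuity in $F$: this is exactly where I rely on $f$ being a density, so that $\{\theta^* = \theta_i\}$ is a $P$-null set and there is no distinction between the open and closed level sets in the integral. Monotonicity of $p^*_{\text{ordered}}$ (which would parallel \thref{coro:CDMA_strategy}) is not needed for continuity and can be left aside.
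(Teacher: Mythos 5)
Your proof is correct, but it takes a genuinely different route from the paper's. The paper argues via the maximum theorem (Theorem 9.17 in Sundaram, as in the proof of \thref{coro:CDMA_strategy}): the best-response correspondence $\mathcal{BR}_{\text{ordered}}(\theta_i, p^*_{\text{ordered}})$ is an upper semi-continuous correspondence in $\theta_i$, and since strict concavity of the utility in $a_i$ makes it single-valued, upper semi-continuity of a singleton-valued correspondence gives continuity. You instead work directly from the closed-form fixed-point identity and reduce everything to continuity of the accumulated-interference function $F(\theta_i) = \mathbb{E}\bigl[p^*_{\text{ordered}}(\theta^*)\theta^*\bm{1}_{\{\theta^*<\theta_i\}}\bigr]$, obtained from absolute continuity of the integral of the dominating integrand $E_{\max}\,\theta^* f(\theta^*)$ (integrable by \thref{assumption:channel_gain}). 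Your route is more elementary and actually supplies a step the paper leaves implicit: for the maximum theorem to apply one needs the objective to be jointly continuous in $(a_i,\theta_i)$, and under NOMA this hinges precisely on the continuity of $\theta_i \mapsto F(\theta_i)$, which is exactly your lemma. Your handling of the strict-inequality indicator (no atom at $\theta^*=\theta_i$ since $f$ is a density) and of the choice of a canonical continuous representative of the $L^{\infty}$ equivalence class are both sound, and your estimate in fact yields local Lipschitz continuity of $F$ on compact subsets of $M$, slightly more than the corollary asserts. One cosmetic point: the continuity of $P_{\mathcal{E}}$ is the scalar projection theorem rather than \thref{lemma:T_nonexpansive} itself, which concerns the induced operator on $L^1$; the underlying fact is the same.
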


\begin{proof}
See Appendix~\ref{sec:proof_coro2}.
\end{proof}

Based on \thref{THM:CDMA_contraction_mapping} and \thref{THM:NOMA2_contraction_mapping}, it is seen that the strategy update with a best response operator is a contraction mapping under both CDMA and NOMA protocols. According to Banach fixed-point theorem~\cite{bertsekas1989parallel}, this has directly led to a distributed algorithm for strategy updates at each player such that the convergence to the unique equilibrium strategy is guaranteed. The distributed algorithm is given in Algorithm~\ref{algo:distributed_NE}.

\begin{algorithm}[h]
\SetAlgoLined
\textbf{Input:} Number of users~$N$\;
\ \ \ \ \ \ \ \ \ Number of iterations~$NUM_{Itr}$\;
\KwResult{The strategy profile~$p^*(\theta_i)$ (CDMA) or~$p^*_{\text{ordered}}(\theta_i)$ (NOMA), where~$i=1,2,\ldots,N$.}
 \textbf{Initialization}: Fix an arbitrarily chosen initial power allocation strategy~$p_0(\theta_i) \in \mathcal{A}$ for users~$i=1,2,\ldots,N$\;
 \For{$k=0$ to~$NUM_{Itr} - 1$}{
 \For{$i=1$ to~$N$}{
 $p_{k+1} (\theta_i) \in \mathcal{BR} (\theta_i, p_k)$ (CDMA) or
 $p_{k+1} (\theta_i) \in \mathcal{BR}_{\text{ordered}} (\theta_i, p_k)$ (NOMA)\;
 (Parallel updates for users~$i=1,2,\ldots,N$.)
 }}
 The strategy profile obtained is~$p^*(\theta_i) \in \mathcal{A}$ (CDMA) or~$p^*_{\text{ordered}} \in \mathcal{A}$ (NOMA), for~$i=1,2,\ldots,N$.
 \caption{Distributed Equilibrium-Seeking Algorithm}
 \label{algo:distributed_NE}
\end{algorithm}

	Next, we will compare between OMA and NOMA for a given non-cooperative user population. Since a subcarrier (e.g., frequency band, time slots, signature sequences, etc.) needs to be allocated to each user before we model the channel interference, we employ CDMA with single-user decoding as a benchmark of OMA protocols. Due to the averaging effect introduced by the spread spectrum techniques, the channel interference faced by users can be described by a mean-field term. The same spread spectrum techniques is adopted in the subcarrier allocation of the NOMA scheme for the fairness of comparison. In this paper, we take power-domain NOMA with SIC as the representative of general NOMA schemes. The results are expected to be valid for other types of subcarriers as well.

\section{Social welfare comparison between CDMA and NOMA}
\label{sec:performance_comparison}
Now that the game $\mathcal{G}$ under either CDMA or NOMA communication protocol admits a unique mean-field equilibrium, it is of interest to conduct social welfare comparison when a large number of players reach an equilibrium under these two protocols respectively. In this paper, we focus more on qualitative analysis than quantitative analysis.

In general, the social welfare comparison between mean-field equilibria of different games is not easily achievable since it will be difficult to characterize the changes in equilibrium strategies. In an aggregative game with a large number of players, we define the social welfare as the average utility achieved by all players.

In this paper, we aim at comparing the effectiveness of the NOMA communication protocol in 5G networks against the CDMA protocol. As introduced in the previous sections, the intrinsic difference between NOMA and CDMA is whether successive interference cancellation (SIC) is adopted. The social welfare comparison of the game equilibria achieved under two different communication protocols can be formulated as two optimization problems with a common objective function (i.e., the social welfare metric), several different constraints (i.e., different communication protocols) as well as a common constraint reflecting the definition of a mean-field equilibrium, which restricts the solution set of each optimization problem to be within the set of mean-field equilibria. The mathematical details are illustrated below.

First, we define the social welfare metric in terms of 
\begin{equation}
J(p,z) := \mathbb{E}[\tilde{u} (p(\theta_i),z,\theta_i)] = \int\limits_{\theta_i \in M} \tilde{u}(p(\theta_i),z,\theta_i) f(\theta_i) d \theta_i,
\label{eq:expected_utility}
\end{equation}
where~$p \in \mathcal{A}$ and~$z: M \to \mathbb{R}$ is~$\nu$-measurable, with the individual utility $\tilde{u}$ corresponding to each player $\theta_i$ defined as a function of the action $p(\theta_i)$ taken by player $\theta_i$ and the interference effects~$z$. The expressions of individual utilities~$\tilde{u}$, based on the formulation of the aggregative game, can be expressed as
\begin{equation}
\tilde{u} (a_i, z, \theta_i) := \log_2 \left( 1 + \theta_i \frac{a_i}{\alpha z(\theta_i) + N_0} \right) - \beta a_i.
\label{eq:individual_utilities_optimization}
\end{equation}

\begin{remark}
For finite users, the social welfare can be defined as the average utility achieved in the population~\cite{cardaliaguet2019efficiency}. When the number of users~$N$ approaches infinity, the social welfare converges to the expectation of individual utilities with respect to the distribution of the user identity~$\theta_i \in M$.
\end{remark}

The only differences between CDMA and NOMA is on the interference effects~$z: M \to \mathbb{R}$.

First, we consider the CDMA protocol. As the channel is shared in an orthogonal manner with spread spectrum techniques, the interference level is identical for different channel users, i.e.,
\begin{equation}
z(\theta_i) = \mathbb{E}[p(\theta_j) \theta_j], \ \forall \theta_i \in M.
\label{eq:interference_CDMA}
\end{equation}

Next, we consider the NOMA protocol. Since the SIC decoding algorithm is used by the base station, as introduced in Section~\ref{sec:NOMA}, the interferences faced by different channel users~$\theta_i \in M$ are non-identical. With slight abuse of notations, we arbitrarily fix a user identity~$\theta_i$ to be a constant rather than a random variable. Due to the assumption on perfect interference cancellation and fixed decoding order (i.e., descending order of the channel gain), the interference faced by each user under NOMA is
\begin{equation}
z(\theta_i) = \mathbb{E}[p(\theta_j) \theta_j \bm{1}_{ \{ \theta_j < \theta_i \}}], \ \forall \theta_i \in M,
\label{eq:interference_NOMA}
\end{equation}
where the indicator function~$\bm{1}_{\{\theta_j < \theta_i\}}$ comes from recursive cancellation of previously decoded signal.

To compare the equilibrium social welfare under CDMA and NOMA, it is difficult to calculate their equilibrium strategies in closed form and evaluate the corresponding equilibrium social welfare~(\ref{eq:expected_utility}). Hence, it is necessary for us to propose approaches for analyzing the trend of changes in the equilibrium performance as we switch the protocol from CDMA to NOMA.

Given an interference profile~$z(\theta_i) \ (\theta_i \in M)$, if we plan to predict the outcome of power control game among channel users and evaluate the corresponding performance, we need to restrict the strategies to satisfy the definition of mean-field equilibrium~(\thref{def:MFE}), i.e.,
\begin{equation}
\tilde{u} (p(\theta_i),z,\theta_i) \geq \tilde{u} (a_i,z,\theta_i), \ \forall a_i \in \mathcal{E}, \ \theta_i \in M.
\label{eq:BR_constraint}
\end{equation}

With the social welfare~$J(p,z)$ as the performance criterion of this game, our objective in this section is to theoretically compare~$J(p,z)$ achieved under the equilibrium strategies of CDMA and NOMA, respectively. We employ a perturbation-based approach on the social welfare~$J(p,z)$ as a functional with respect to~$(p,z)$ for characterizing the trend by which the equilibrium social welfare changes when we switch the protocol from CDMA to NOMA. Details are given in \thref{lemma:nec_optimal} and illustrated in Fig.~\ref{fig:perturbation}.

\begin{figure}[h!]
	\begin{center}
		\includegraphics[width=0.5\textwidth]{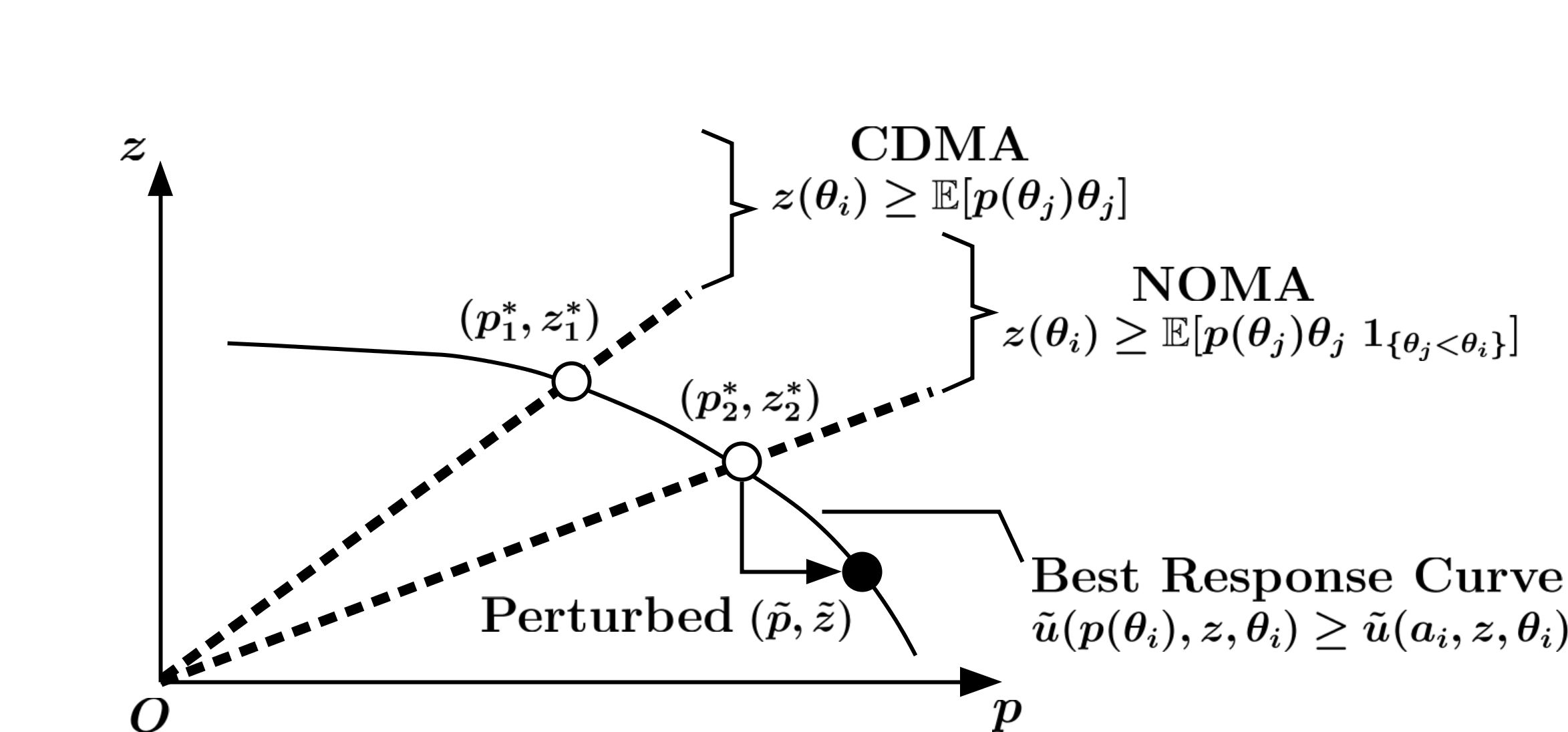}	
		\caption{Relaxed spaces of feasible function-valued variable pairs~$(p,z)$ and strategy perturbation.}
		\label{fig:perturbation}
	\end{center}
\end{figure}

Given the unique equilibrium power allocation strategies~$p^*$ and~$p^*_{\text{ordered}}$ in power control game~$\mathcal{G}$ under CDMA and NOMA, the corresponding interference profiles are given by~$z^*(\theta_i) = \mathbb{E}[p^*(\theta_j) \theta_j]$ and~$z_{\text{ordered}}^*(\theta_i) = \mathbb{E}[p_{\text{ordered}}^*(\theta_j) \theta_j \bm{1}_{\{\theta_j < \theta_i\}}]$ for any~$\theta_i \in M$, respectively.

\begin{theorem}[NOMA outperforms CDMA under equilibria]
NOMA can achieve a strictly better social welfare than CDMA at the corresponding equilibrium strategies, i.e.,~$J(p_{\text{ordered}}^*, z_{\text{ordered}}^*) > J(p^*, z^*)$.
\thlabel{lemma:nec_optimal}
\end{theorem}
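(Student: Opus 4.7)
The plan is to employ a continuous perturbation in the function-valued pair $(p,z)$ that smoothly interpolates between the CDMA and NOMA equilibria, and to combine the envelope theorem with a contraction-based sign analysis of the induced interference derivative. Concretely, introduce the one-parameter family of interference functionals
\begin{equation*}
z^{(\lambda)}(\theta_i;p):=\mathbb{E}\!\left[p(\theta_j)\theta_j\!\left(1-\lambda\bm{1}_{\{\theta_j>\theta_i\}}\right)\right],\ \ \lambda\in[0,1],
\end{equation*}
which recovers~(\ref{eq:interference_CDMA}) at $\lambda=0$ and~(\ref{eq:interference_NOMA}) at $\lambda=1$. Since $\alpha<1$, the same contraction argument used in the proofs of \thref{THM:CDMA_contraction_mapping} and \thref{THM:NOMA2_contraction_mapping} applies uniformly for every $\lambda\in[0,1]$, yielding a unique equilibrium $p^{(\lambda)}$ and associated interference $z^{(\lambda)*}(\theta_i):=z^{(\lambda)}(\theta_i;p^{(\lambda)})$. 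Proving the theorem is then equivalent to showing $F(1)>F(0)$ for the scalar function $F(\lambda):=J(p^{(\lambda)},z^{(\lambda)*})$.

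First I would show that $\lambda\mapsto(p^{(\lambda)},z^{(\lambda)*})$ is continuously differentiable, via the implicit function theorem applied to the best-response fixed-point equation (using $\alpha<1$ to invert $I$ minus the linearization of the best-response map). Then the envelope theorem applies: since $p^{(\lambda)}(\theta_i)$ pointwise maximizes $\tilde u(\cdot,z^{(\lambda)*},\theta_i)$ for every $\theta_i\in M$, the $dp^{(\lambda)}/d\lambda$ contribution vanishes from $F'(\lambda)$, leaving
\begin{equation*}
F'(\lambda)=\int_M\frac{\partial\tilde u}{\partial z(\theta_i)}\bigg|_{(p^{(\lambda)},z^{(\lambda)*})}\!\cdot\frac{dz^{(\lambda)*}(\theta_i)}{d\lambda}\,f(\theta_i)\,d\theta_i,
\end{equation*}
where $\partial\tilde u/\partial z(\theta_i)<0$ on the set where $p^{(\lambda)}(\theta_i)>0$. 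Differentiating the interior best-response $p^{(\lambda)}(\theta_i)=\frac{1}{\beta\ln 2}-\frac{\alpha z^{(\lambda)*}(\theta_i)+N_0}{\theta_i}$ in $\lambda$ and substituting back into $z^{(\lambda)*}$ yields a linear integral equation $(I+\alpha A_\lambda)g=h$ for $g(\theta_i):=dz^{(\lambda)*}(\theta_i)/d\lambda$, where $A_\lambda$ is a positive integral operator of norm at most one and $h(\theta_i)=-\mathbb{E}[p^{(\lambda)}(\theta_j)\theta_j\bm{1}_{\{\theta_j>\theta_i\}}]\leq 0$ with strict inequality on a set of positive $\nu$-measure. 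Invertibility of $I+\alpha A_\lambda$ is guaranteed by $\alpha<1$. Combining the signs should give $F'(\lambda)>0$ throughout $[0,1]$, so that $F(1)-F(0)=\int_0^1 F'(\lambda)\,d\lambda>0$.

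The main obstacle is the sign analysis in the last step. The direct cancellation term $h\leq 0$ is transparent, but the indirect feedback—lower interference incentivizes higher transmit power, which in turn raises others' interference floor—means $g(\theta_i)$ need not be pointwise non-positive, especially for users in the extreme upper tail of the channel-gain distribution where essentially no users remain above to be cancelled. The crux will be to show, via a Neumann-series expansion of $(I+\alpha A_\lambda)^{-1}$ controlled by the contraction factor $\alpha<1$, that the negative portion of $g$, weighted by the strictly positive quantity $-\partial\tilde u/\partial z(\theta_i)\cdot f(\theta_i)$, strictly dominates in the integral that defines $F'(\lambda)$. Boundary cases where $p^{(\lambda)}(\theta_i)$ saturates at $E_{\min}$ or $E_{\max}$ would be handled through the KKT form of the envelope theorem, which preserves the non-positivity of the $p$-gradient contribution.
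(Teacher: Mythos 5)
Your homotopy-plus-envelope strategy is a genuinely different route from the paper's, but it has a real gap at exactly the step you flag as ``the crux'': you never establish $F'(\lambda)>0$. After the envelope step you are left with $F'(\lambda)=\int_M \left(-\partial\tilde u/\partial z(\theta_i)\right)\left(-g(\theta_i)\right)f(\theta_i)\,d\theta_i$ with $-g=(I+\alpha A_\lambda)^{-1}(-h)$, $-h\geq 0$. Because the Neumann series for $(I+\alpha A_\lambda)^{-1}$ alternates in sign, $-g$ is not pointwise nonnegative, and the natural fix --- pairing against the weight and showing $(I+\alpha A_\lambda^{\dagger})^{-1}$ applied to the weight stays positive --- fails with the crude bound $v\geq w-\tfrac{\alpha}{1-\alpha}\norm{w}_{\infty}$, since the weight $-\partial\tilde u/\partial z\cdot f$ vanishes both at the power cutoff (where $p^{(\lambda)}(\theta_i)\to 0$) and in the tails of $f$, so it is not bounded below by any fixed fraction of its supremum. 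No argument in the proposal closes this; ``should give $F'(\lambda)>0$'' is an assertion, and it is precisely the ``trend of changes \ldots difficult to evaluate'' obstacle the paper identifies in its introduction. A secondary issue: the implicit-function-theorem step runs through the projection $P_{\mathcal{E}}$, which is nonsmooth, and the saturation set $\{\theta_i: p^{(\lambda)}(\theta_i)\in\{E_{\min},E_{\max}\}\}$ can move with $\lambda$, so continuous differentiability of $\lambda\mapsto(p^{(\lambda)},z^{(\lambda)*})$ is itself nontrivial and unproven.

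For contrast, the paper avoids derivatives in $\lambda$ entirely. It relaxes the interference identity to the inequality $z(\theta_i)\geq\mathbb{E}[p(\theta_j)\theta_j\bm{1}_{\{\theta_j<\theta_i\}}]$ (resp.\ without the indicator for CDMA), obtaining two auxiliary maximization problems whose feasible sets are \emph{nested} because $\mathbb{E}[p(\theta_j)\theta_j\bm{1}_{\{\theta_j<\theta_i\}}]\leq\mathbb{E}[p(\theta_j)\theta_j]$. A one-sided perturbation (lowering $z$ by $\epsilon/K$ on a compact positive-measure set and re-optimizing $p$, with $K$ chosen large using $\alpha<1$) shows the constraint must bind at any optimum; the binding solution is then the unique equilibrium from \thref{THM:CDMA_contraction_mapping} and \thref{THM:NOMA2_contraction_mapping}, and strictness follows because the CDMA optimizer is NOMA-feasible yet violates NOMA's necessary optimality condition. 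That argument only ever needs the \emph{sign} of $\partial\tilde u/\partial z$ pointwise, never the sign of an equilibrium sensitivity, which is what makes it go through where your plan stalls. If you want to salvage your route, you would need either a monotonicity/positivity property of $(I+\alpha A_\lambda^{\dagger})^{-1}$ adapted to the specific weight, or to replace the derivative argument with a one-sided comparison of the type the paper uses.
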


\begin{proof}
To build up a bridge between the equilibrium performance of the power control game under CDMA and NOMA, we relax the conditions on interference profile~$z(\theta_i)$ such that only lower bounds are imposed, i.e.,~$z(\theta_i) \geq \mathbb{E}[p(\theta_j) \theta_j], \ \forall \theta_i \in M$ for CDMA and~$z(\theta_i) \geq \mathbb{E}[p(\theta_j) \theta_j \bm{1}_{\{\theta_j < \theta_i\}}], \ \forall \theta_i \in M$ for NOMA.

Under the relaxed interference profile~$z$, given the common best response condition~(\ref{eq:BR_constraint}) for both protocols, the space of feasible variable pairs~$(p,z)$ in CDMA is a subset of that in NOMA due to the fact that~$\mathbb{E}[p(\theta_j) \theta_j \bm{1}_{\{\theta_j < \theta_i\}}] \leq \mathbb{E}[p(\theta_j) \theta_j ], \ \forall \theta_i \in M$, as indicated in Fig.~\ref{fig:perturbation}.

Then we show that the maximum social welfare in the relaxed space of feasible variable pairs~$(p,z)$ under either CDMA or NOMA can only be obtained at the boundary of the relaxed conditions on interference~$z$. For convenience of presentation, given the relaxed variable space and the social welfare criterion, we construct two auxiliary optimization problems (\thref{problem:CDMA} and \thref{problem:NOMA}) to search for the best possible performance. Then, it is equivalent to showing that for both protocols, the maximum social welfare we seek can only be achieved on the boundary of the feasible sets.

\begin{problem}[CDMA - relaxed]
\begin{equation}
\begin{aligned}
& \max\limits_{p \in \mathcal{A}, \ z \in \mathcal{M}_{\mathbb{R}}(M)}
& & J(p,z)\\
& \text{subject to}
& & z(\theta_i) \geq \mathbb{E}[p(\theta_j) \theta_j], \\
&&& \tilde{u} (p(\theta_i),z,\theta_i) \geq \tilde{u} (a_i,z,\theta_i),\\
&&& \forall \ a_i \in \mathcal{E}, \ \theta_i \in M.
\end{aligned}
\label{eq:problem_CDMA}
\end{equation}
\thlabel{problem:CDMA}
\end{problem}

\begin{problem}[NOMA - relaxed]
\begin{equation}
\begin{aligned}
& \max\limits_{p \in \mathcal{A}, \ z \in \mathcal{M}_{\mathbb{R}}(M)}
& & J(p,z)\\
& \text{subject to}
& & z(\theta_i) \geq \mathbb{E}[p(\theta_j) \theta_j \bm{1}_{ \{ \theta_j < \theta_i \}}], \\
&&& \tilde{u} (p(\theta_i),z,\theta_i) \geq \tilde{u} (a_i,z,\theta_i),\\
&&& \forall \ a_i \in \mathcal{E}, \ \theta_i \in M.
\end{aligned}
\label{eq:problem_NOMA}
\end{equation}
\thlabel{problem:NOMA}
\end{problem}

Below, we establish the necessary optimality condition for NOMA, while a similar approach applies to CDMA.

To begin with, we focus on the constraint $z(\theta_i) \geq \mathbb{E}[p(\theta_j) \theta_j \bm{1}_{\{ \theta_j < \theta_i \}}]$. We aim at showing that any pair of optimal solution~$(p^*,z^*)$ to \thref{problem:NOMA} satisfies $z^*(\theta_i) = \mathbb{E} [p^*(\theta_j) \theta_j \bm{1}_{\{ \theta_j < \theta_i \}}]$ a.e. in~$M$.

From the feasible set of \thref{problem:NOMA}, we pick up a pair of decision variables~$(p,z)$ such that~$\tilde{u} (p(\theta_i),z,\theta_i) \geq \tilde{u} (a_i,z,\theta_i), \ \forall \theta_i \in M$ and there exists a bounded set~$\overline{M}_2 \subset M$ satisfying $P(\overline{M}_2) > 0$ and for any $\theta_i \in \overline{M}_2$, there is $z(\theta_i) > \mathbb{E} [p(\theta_j) \theta_j \bm{1}_{\{ \theta_j < \theta_i \}}]$. Assume~$(p,z)$ is an optimal solution to \thref{problem:NOMA}.

Define a measurable function~$\epsilon: \overline{M}_2 \to \mathbb{R}$ such that $\epsilon(\theta_i):=z(\theta_i) - \mathbb{E}[p(\theta_j) \theta_j \bm{1}_{\{\theta_j < \theta_i\}}] > 0$. Since the space~$M \subset \mathbb{R}$ is a metric space, according to Lemma~4.1 (Lusin's theorem) in Chapter~II of~\cite{parthasarathy2005probability}, for~$\epsilon_2:=\frac{1}{2} P(\overline{M}_2)>0$, there exists a closed set~$M_2 \subset \overline{M}_2$ such that~$\nu(\overline{M}_2 \setminus M_2) \leq \epsilon_2$ and the restriction of the measurable function~$\epsilon$ on the set~$M_2$, which is denoted as~$\epsilon_{M_2}: M_2 \to \mathbb{R}$, is continuous. Since~$\overline{M}_2$ is bounded, the closed set~$M_2 \subset \overline{M}_2$ is compact. Hence, based on Weierstrass extreme value theorem~\cite{luenberger1997optimization}, there exists a~$\theta' \in M_2$ such that~$\inf\limits_{\theta_i \in M_2} \epsilon_{M_2}(\theta_i) = \epsilon_{M_2}(\theta') = \epsilon(\theta') >0$.

Then, we construct a new variable~$\tilde{z}$ such that
\begin{align*}
\tilde{z} (\theta_i) :=
\begin{cases}
z(\theta_i) - \frac{1}{K} \epsilon (\theta_i), & \theta_i \in M_2; \cr
z(\theta_i), & \text{Otherwise},
\end{cases}
\end{align*}
where~$K>1$ is a scaling factor.

In order for the constructed variable~$\tilde{z}$ to satisfy the constraint $\tilde{u} (p(\theta_i),z,\theta_i) \geq \tilde{u} (a_i,z,\theta_i), \ \forall \theta_i \in M$, we obtain an updated version of the optimal power control variable~$\tilde{p}$ in response to the change in the interference term from~$z$ to~$\tilde{z}$. Since the individual utility function~$\tilde{u}(a_i,z,\theta_i)$ in the optimization problem is strictly concave with respect to the variable~$a_i$, it has a unique maximizer in terms of~$a_i$ when other variables are fixed. Then, the updated version of the optimal power control strategy $\tilde{p}(\theta_i)$ is expressed as
\begin{align}
\label{eq:argmax}
\tilde{p}(\theta_i) &:= \argmax\limits_{a_i \in \mathcal{E}} \ \tilde{u}(a_i,\tilde{z},\theta_i) \\
&= P_{\mathcal{E}} \left( \frac{1}{\beta \ln 2} - \frac{\alpha \tilde{z}(\theta_i) + N_0}{\theta_i} \right) \nonumber\\
&= \begin{cases}
P_{\mathcal{E}} \left( \frac{1}{\beta \ln 2} - \frac{\alpha [z(\theta_i) - \frac{1}{K} \epsilon(\theta_i)] + N_0}{\theta_i} \right), & \theta_i \in M_2; \cr
p(\theta_i), & \text{Otherwise}.
\end{cases} \nonumber
\end{align}

It remains to verify the existence of a scaling factor~$K>1$ such that the pair~$(\tilde{p},\tilde{z})$ satisfies the constraint $z(\theta_i) \geq \mathbb{E}[p(\theta_j) \theta_j \bm{1}_{\{\theta_j < \theta_i\}}]$ for any~$\theta_i \in M$. By definition of~$(\tilde{p},\tilde{z})$, it suffices to show that $\tilde{z}(\theta_i) \geq \mathbb{E}[\tilde{p}(\theta_j) \theta_j \bm{1}_{\{\theta_j < \theta_i\}}]$ for any~$\theta_i \in M_2$.

By the derivations in~(\ref{eq:argmax}), for any~$\theta_i \in M$,
\begin{align*}
& \ \ \ \abs{p(\theta_i) - \tilde{p}(\theta_i)}\\
&= \abs{\argmax\limits_{a_i \in \mathcal{E}} \ \tilde{u}(a_i,z,\theta_i) - \argmax\limits_{a_i \in \mathcal{E}} \ \tilde{u}(a_i,\tilde{z},\theta_i)}\\
& \leq \abs{\frac{\alpha [z(\theta_i) - \tilde{z}(\theta_i)]}{\theta_i}}
\leq \frac{\alpha \epsilon(\theta_i)}{K \theta_i}.
\end{align*}

Hence for any~$\theta_i \in M$,
\begin{align*}
& \ \ \ \abs{\mathbb{E}[(p(\theta_j) - \tilde{p}(\theta_j)) \theta_j \bm{1}_{\{\theta_j < \theta_i\}}]}\\
&\leq \mathbb{E}[\abs{p(\theta_j) - \tilde{p}(\theta_j)} \theta_j]
= \frac{\alpha}{K} \mathbb{E}[\epsilon(\theta_j)].
\end{align*}

As~$\epsilon(\theta_i) > 0$ for any~$\theta_i \in M_2$ and $0 \leq \mathbb{E}[\epsilon(\theta_j)] < \infty$ is a constant, there exists a sufficiently large~$K$ such that
\begin{align*}
& \ \ \ \tilde{z}(\theta_i) - \mathbb{E}[\tilde{p}(\theta_j)\theta_j \bm{1}_{\{\theta_j < \theta_i\}}]\\
&=z(\theta_i) - \frac{1}{K}\epsilon(\theta_i) - \mathbb{E}[p(\theta_j)\theta_j \bm{1}_{\{\theta_j < \theta_i\}}] +\\
& \ \ \ \mathbb{E}[(p(\theta_j) - \tilde{p}(\theta_j)) \theta_j \bm{1}_{\{\theta_j < \theta_i\}}]\\
&\geq \frac{K-1}{K} \epsilon(\theta_i) - \abs{\mathbb{E}[(p(\theta_j) - \tilde{p}(\theta_j)) \theta_j \bm{1}_{\{\theta_j < \theta_i\}}]}\\
&\geq \frac{K-1}{K} \epsilon_{M_2}(\theta') - \frac{\alpha}{K} \mathbb{E}[\epsilon(\theta_j)] > 0, \ \forall \theta_i \in M_2,
\end{align*}
where the last inequality holds due to~$\epsilon(\theta_i)=\epsilon_{M_2}(\theta_i) \geq \inf\limits_{\theta_i \in M_2} \epsilon_{M_2}(\theta_i) = \epsilon_{M_2}(\theta') $ for any~$\theta_i \in M_2$ and by a fixed choice of~$K > \frac{\alpha \mathbb{E}[\epsilon(\theta_j)]}{\epsilon(\theta')} + 1 \geq 1$. Thus, the feasibility of the constructed variable pair~$(\tilde{p},\tilde{z})$ is successfully shown.

Now, since $\tilde{z}(\theta_i) < z(\theta_i)$ by definition, we can obtain the following inequalities
\begin{align*}
\tilde{u} (p(\theta_i),z,\theta_i) &= \log_2 \left( 1 + \theta_i \frac{p(\theta_i)}{\alpha z(\theta_i) + N_0} \right) - \beta p(\theta_i)\\
&< \log_2 \left( 1 + \theta_i \frac{p(\theta_i)}{\alpha \tilde{z}(\theta_i) + N_0} \right) - \beta p(\theta_i)\\
&\leq \log_2 \left( 1 + \theta_i \frac{\tilde{p}(\theta_i)}{\alpha \tilde{z}(\theta_i) + N_0} \right) - \beta \tilde{p}(\theta_i)\\
&= \tilde{u} (\tilde{p}(\theta_i),\tilde{z},\theta_i), \ \forall \theta_i \in M_2,
\end{align*}
where the last inequality is due to the fact that $a_i=\tilde{p}(\theta_i)$ is a maximizer of $\tilde{u}(a_i,\tilde{z},\theta_i)$.

Therefore, the social welfare under the new decision variable satisfies
\begin{align*}
& J(\tilde{p},\tilde{z})
= \int\limits_{\theta_i \in M} \tilde{u}(\tilde{p}(\theta_i),\tilde{z},\theta_i) f(\theta_i) d \theta_i
>\\
&\int\limits_{\theta_i \in M_2} \tilde{u}(p(\theta_i),z,\theta_i) f(\theta_i) d \theta_i + \int\limits_{\theta_i \in M \setminus M_2} \tilde{u}(\tilde{p}(\theta_i),\tilde{z},\theta_i) f(\theta_i) d \theta_i\\
&= \int\limits_{\theta_i \in M_2} \tilde{u}(p(\theta_i),z,\theta_i) f(\theta_i) d \theta_i +\\
& \int\limits_{\theta_i \in M \setminus M_2} \tilde{u}(p(\theta_i),z,\theta_i) f(\theta_i) d \theta_i
\ = \  J(p,z),
\end{align*}
which indicates that it is not possible for the original pair of decision variables $(p,z)$ to be optimal.

Therefore, $z(\theta_i) = \mathbb{E} [p(\theta_j) \theta_j \bm{1}_{\{ \theta_j < \theta_i \}}]$ holds almost everywhere in $M$ is a necessary optimality condition of \thref{problem:NOMA}. In light of the proof above, a necessary condition for optimality can be obtained for \thref{problem:CDMA} such that $z(\theta_i) = \mathbb{E} [p(\theta_j) \theta_j]$ holds almost everywhere in~$M$.

Denote the optimal solution to \thref{problem:CDMA} as~$(p_1^*,z_1^*)$ and the optimal solution to \thref{problem:NOMA} as~$(p_2^*,z_2^*)$. Take into account the inclusive relationship between the relaxed spaces of feasible variables~$(p,z)$ for CDMA and NOMA, we can obtain~$J(p_2^*,z_2^*) \geq J(p_1^*,z_1^*)$. Next, we show that the above inequality is strict.

Assume~$J(p_2^*,z_2^*) = J(p_1^*,z_1^*)$, since the optimal variable pair~$(p_1^*,z_1^*)$ in \thref{problem:CDMA} is also feasible for \thref{problem:NOMA}. Then the optimal social welfare for NOMA can also be achieved at~$(p_1^*,z_1^*)$. However, since there exists a non-zero measure set~$M_+ \subset M$ such that~$z_1^*(\theta_i) = \mathbb{E} [p_1^*(\theta_j) \theta_j] \neq \mathbb{E} [p_1^*(\theta_j) \theta_j \bm{1}_{\{ \theta_j < \theta_i \}}], \ \forall \theta_i \in M_+$, this contradicts with the necessary optimality condition for \thref{problem:NOMA}. Thus, we have~$J(p_2^*,z_2^*) > J(p_1^*,z_1^*)$.

Next, we show that the social welfare achieved by the variable pairs~$(p_1^*,z_1^*), \ (p_2^*,z_2^*)$ equals that achieved under the equilibrium of CDMA, i.e.,~$(p^*,z^*)$, and NOMA, i.e.,~$(p_{\text{ordered}}^*,z_{\text{ordered}}^*)$, respectively. We show it for NOMA, as similar arguments can be followed for CDMA.

For the variable pair~$(p_2^*,z_2^*)$ achieving the optimal value of \thref{problem:NOMA}, there exists a subset $M_0 \subset M$ with $P(M_0) = 0$, where $z_2^*(\theta_i) = \mathbb{E}[p_2^*(\theta_j) \theta_j \bm{1}_{\{\theta_j < \theta_i\}}]$ holds for any~$\theta_i \in M \setminus M_0$. Hence, the strict inequality $z_2^*(\theta_i) > \mathbb{E}[p_2^*(\theta_j) \theta_j \bm{1}_{\{\theta_j < \theta_i\}}]$ can only be satisfied at some points~$\theta_i$ in~$M_0$.

Based on this variable pair, we construct an auxiliary variable pair $(\tilde{p},\tilde{z})$ as follows.
\begin{equation}
\tilde{z}(\theta_i) :=
\begin{cases}
z_2^* (\theta_i), & \theta_i \in M \setminus M_0;\cr
\mathbb{E}[p_2^*(\theta_j) \theta_j \bm{1}_{\{\theta_j < \theta_i\}}], & \text{Otherwise},
\end{cases}
\end{equation}
and
\begin{equation}
\tilde{p}(\theta_i) =
\begin{cases}
p_2^*(\theta_i), & \theta_i \in M \setminus M_0; \cr
P_{\mathcal{E}} \left( \frac{1}{\beta \ln 2} - \frac{\alpha \tilde{z}(\theta_i) + N_0}{\theta_i} \right), & \text{Otherwise}.
\end{cases}
\end{equation}

As a result, we can obtain that
\begin{equation}
\begin{cases}
\tilde{z}(\theta_i) = \mathbb{E} [p_2^*(\theta_j) \theta_j \bm{1}_{\{\theta_j < \theta_i\}}], & \forall \theta_i \in M; \cr
\tilde{p}(\theta_i) = P_{\mathcal{E}} \left( \frac{1}{\beta \ln 2} - \frac{\alpha \tilde{z}(\theta_i) + N_0}{\theta_i} \right), & \forall \theta_i \in M.
\end{cases}
\end{equation}

In order to verify that the pair of auxiliary variables $(\tilde{p},\tilde{z})$ satisfies the necessary optimality condition for \thref{problem:NOMA}, it suffices to show $\mathbb{E}[\tilde{p}(\theta_j) \theta_j \bm{1}_{\{\theta_j < \theta_i\}}]=\mathbb{E}[p_2^*(\theta_j) \theta_j \bm{1}_{\{\theta_j < \theta_i\}}]$.

We derive that
\begin{align*}
\mathbb{E}[\tilde{p}(\theta_j) \theta_j \bm{1}_{\{\theta_j < \theta_i\}}] &= \int\limits_{\theta_j \in M \cap (0,\theta_i]} \tilde{p}(\theta_j) \theta_j f(\theta_j) d \theta_j\\
&= 0 + \int\limits_{\theta_j \in (M/M_0) \cap (0,\theta_i]} p_2^*(\theta_j) \theta_j f(\theta_j) d\theta_j\\
&= \mathbb{E} [p_2^*(\theta_j) \theta_j \bm{1}_{\{\theta_j < \theta_i \}}],
\end{align*}
where the second equality holds due to the fact that the set $M_0$ has Lebesgue measure zero, i.e., $P(M_0)=0$.

As a result, the pair of auxiliary variables $(\tilde{p},\tilde{z})$ satisfies
\begin{equation}
\begin{cases}
\tilde{u}(\tilde{p}(\theta_i),\tilde{z},\theta_i) \geq \tilde{u}(a_i, \tilde{z}, \theta_i), & \forall a_i \in \mathcal{E}, \ \theta_i \in M; \cr
\tilde{z}(\theta_i) = \mathbb{E} [\tilde{p}(\theta_j) \theta_j \bm{1}_{\{\theta_j < \theta_i\}}],  & \forall \theta_i \in M.
\end{cases}
\end{equation}

These conditions correspond to the definition of mean-field equilibrium of the game~$\mathcal{G}$ when NOMA is adopted, and the mean-field equilibrium strategy for NOMA uniquely exists according to \thref{THM:NOMA2_contraction_mapping}. Hence, we can obtain that $(\tilde{p},\tilde{z}) = (p_{\text{ordered}}^*,z_{\text{ordered}}^*)$.

Now, since~$P(M_0)=0$, we analyse the value of the objective function achieved, as follows.
\begin{align*}
J(p_2^*,z_2^*) &= \int\limits_{\theta_i \in M} \tilde{u}(p(\theta_i),z,\theta_i) f(\theta_i) d \theta_i\\
&= 0 +\int\limits_{\theta_i \in M \setminus M_0} \tilde{u}(p(\theta_i),z,\theta_i) f(\theta_i) d \theta_i\\
&=0 + \int\limits_{\theta_i \in M \setminus M_0} \tilde{u}(\tilde{p}(\theta_i),\tilde{z},\theta_i) f(\theta_i) d \theta_i\\
&=J(\tilde{p},\tilde{z})=J(p_{\text{ordered}}^*,z_{\text{ordered}}^*).
\end{align*}

Therefore, the optimal solution~$(p_2^*,z_2^*)$ to \thref{problem:NOMA} achieves exactly the same value of the objective function~$J(p,z)$ as the mean-field equilibrium variable pair~$(p^*_{\text{ordered}},z^*_{\text{ordered}})$. With similar derivations for CDMA, we conclude that~$J(p_{\text{ordered}}^*,z_{\text{ordered}}^*) > J(p^*,z^*)$. This completes the proof.
\end{proof}

\section{Individual behaviors at the equilibrium}
\label{sec:individual_properties}
The collective behaviors comparison among the population of users has been conducted in the previous section. It is of interest to characterize individual behaviors at the equilibrium.

According to~\thref{THM:CDMA_contraction_mapping} and~\thref{THM:NOMA2_contraction_mapping}, denote the unique equilibrium strategy for CDMA as $p^* \in \mathcal{A}$ and the equilibrium strategy for NOMA as $p_{\text{ordered}}^* \in \mathcal{A}$. For an unbounded set $M$ of user identities, we obtain an additional property such that for users with sufficiently large uplink channel gains, their equilibrium transmission power under CDMA and NOMA can be arbitrarily close.

\begin{proposition}[Convergence behavior for high-gain users]
Assume the player set $M$ is unbounded from above, i.e., $\forall L>0, \ \exists  \theta_i \in M \ s.t. \ \theta_i > L$. Then
\begin{equation}
\lim\limits_{\theta_i \to \infty} \abs{p^*(\theta_i) - p^*_{\text{ordered}}(\theta_i)} = 0.
\end{equation}
\thlabel{prop:convergence}
\end{proposition}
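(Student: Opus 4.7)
The plan is to exploit the closed-form best-response characterizations of both equilibria, together with the non-expansiveness of $P_{\mathcal{E}}$, to bound $|p^*(\theta_i) - p^*_{\text{ordered}}(\theta_i)|$ by an explicit quantity that decays like $1/\theta_i$. The key observation is that the two fixed-point formulas~\eqref{eq:BR_CDMA} and~\eqref{eq:BR_NOMA2} differ only in the interference term, and that the channel gain $\theta_i$ enters as a divisor of this term.

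First, I would use \thref{THM:CDMA_contraction_mapping} and \thref{THM:NOMA2_contraction_mapping} to write the two equilibria as
\[
p^*(\theta_i) = P_{\mathcal{E}}\!\left(\tfrac{1}{\beta \ln 2} - \tfrac{\alpha z^* + N_0}{\theta_i}\right),\qquad
p^*_{\text{ordered}}(\theta_i) = P_{\mathcal{E}}\!\left(\tfrac{1}{\beta \ln 2} - \tfrac{\alpha z^*_{\text{ordered}}(\theta_i) + N_0}{\theta_i}\right),
\]
where $z^* := \mathbb{E}[p^*(\theta_j)\theta_j]$ and $z^*_{\text{ordered}}(\theta_i) := \mathbb{E}[p^*_{\text{ordered}}(\theta_j)\theta_j \bm{1}_{\{\theta_j<\theta_i\}}]$. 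Subtracting inside the projection, the constant $1/(\beta\ln 2)$ and $N_0/\theta_i$ cancel, and the non-expansive property of $P_{\mathcal{E}}$ (Proposition~3.2 of Bertsekas and Tsitsiklis~\cite{bertsekas1989parallel}, as already invoked for \thref{lemma:T_nonexpansive}) yields the pointwise bound $|p^*(\theta_i) - p^*_{\text{ordered}}(\theta_i)| \leq \alpha |z^* - z^*_{\text{ordered}}(\theta_i)|/\theta_i$.

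Next, I would establish that the numerator is uniformly bounded in $\theta_i$. Since both $p^*$ and $p^*_{\text{ordered}}$ lie in $\mathcal{A}$, they are bounded above by $E_{\max}$ $\nu$-a.e.; combining this with \thref{assumption:channel_gain} gives $0 \leq z^* \leq E_{\max}\mathbb{E}[\norm{h_i}^2]$ and, uniformly in $\theta_i$, $0 \leq z^*_{\text{ordered}}(\theta_i) \leq E_{\max}\mathbb{E}[\norm{h_i}^2]$. Hence $|z^* - z^*_{\text{ordered}}(\theta_i)| \leq 2 E_{\max}\mathbb{E}[\norm{h_i}^2] < \infty$, a constant independent of $\theta_i$. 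Combining this with the previous step,
\[
|p^*(\theta_i) - p^*_{\text{ordered}}(\theta_i)| \;\leq\; \frac{2\alpha E_{\max}\,\mathbb{E}[\norm{h_i}^2]}{\theta_i},
\]
and, since by hypothesis $M$ is unbounded from above, letting $\theta_i \to \infty$ along $M$ completes the proof.

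I do not anticipate any real obstacle: the argument reduces to non-expansiveness of the projection plus a uniform a priori bound on the two interference terms, both of which have already been established in the paper. The only thing worth double-checking is that $z^*_{\text{ordered}}(\theta_i)$ is genuinely bounded uniformly in $\theta_i$ (not merely finite for each $\theta_i$), but this is immediate because replacing $\bm{1}_{\{\theta_j<\theta_i\}}$ by $1$ only enlarges the expectation, and \thref{assumption:channel_gain} guarantees this enlarged quantity is finite.
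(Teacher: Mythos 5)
Your proof is correct and follows essentially the same route as the paper's: bound $\abs{p^*(\theta_i)-p^*_{\text{ordered}}(\theta_i)}$ by $\tfrac{\alpha}{\theta_i}$ times the difference of the two interference terms, bound that difference uniformly by $2E_{\max}\mathbb{E}[\norm{h_i}^2]$ via \thref{assumption:channel_gain}, and let $\theta_i\to\infty$. If anything, your version is slightly more careful than the paper's, which writes the first step as an equality and thereby implicitly ignores the truncation $P_{\mathcal{E}}$, whereas you correctly invoke its non-expansiveness to get the inequality.
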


\begin{proof}
By definition of the equilibrium strategies~$p^*$ and~$p^*_{\text{ordered}}$, we obtain the following inequality based on the best response operator~(\ref{def:BR}). For any~$\theta_i \in M$,
\begin{align*}
& \ \ \ \abs{p^*(\theta_i) - p^*_{\text{ordered}}(\theta_i)}\\
&= \frac{\alpha}{\theta_i} \abs{\mathbb{E}[p^*(\theta_j) \theta_j] - \mathbb{E}[p^*_{\text{ordered}}(\theta_j) \theta_j \bm{1}_{\{\theta_j < \theta_i\}}]}\\
&\leq \frac{\alpha}{\theta_i} \{ \abs{\mathbb{E}[p^*(\theta_j) \theta_j]} + \abs{\mathbb{E}[p^*_{\text{ordered}}(\theta_j) \theta_j]} \} \leq \frac{2\alpha}{\theta_i} E_{\max} \mathbb{E}[\norm{h_i}^2].
\end{align*}

For any~$\epsilon > 0$, we choose~$\theta_{\epsilon} > \frac{2 \alpha E_{\max} \mathbb{E}[\norm{h_i}^2]}{\epsilon}$. Then we obtain that~$\abs{p^*(\theta_i) - p^*_{\text{ordered}}(\theta_i)}  <\epsilon$ for any~$\theta > \theta_{\epsilon}$, which completes the proof.
\end{proof}

More importantly, it can be shown that the curve of equilibrium power allocation for different users under CDMA and NOMA will have a crossing. We can further prove that the curve of equilibrium data rate achieved under CDMA and NOMA crosses as well. In other words, pointwise improvement in the equilibrium data rate for different types of users is not achievable through adopting NOMA instead of CDMA. Intuitively, there is ``no free lunch" in employing NOMA to improve for all individual users.

\begin{proposition}[Infeasibility of pointwise improvement]
Assume~$E_{\min} = 0$. Then the curve of equilibrium power strategy~$p^*$ for CDMA crosses~$p^*_{\text{ordered}}$ for NOMA, i.e., $\exists \ \theta_{\text{cross}} \in M$ s.t. $p^*(\theta_{\text{cross}}) = p^*_{\text{ordered}} (\theta_{\text{cross}})$ and $\exists \ \theta_- < \theta_{\text{cross}} < \theta_+$ with $[p^*(\theta_-) - p^*_{\text{ordered}}(\theta_-)] \cdot [p^*(\theta_+) - p^*_{\text{ordered}}(\theta_+)] < 0$. Consequently, it is infeasible to achieve pointwise improvement in the equilibrium data rate achieved by NOMA in comparison with CDMA.

\thlabel{prop:cross}
\end{proposition}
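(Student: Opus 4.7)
The plan is to analyse the sign of the scalar-valued function $\Delta(\theta):=\alpha\bigl(z^* - z^*_{\text{ordered}}(\theta)\bigr)$ on $M$, where $z^*=\mathbb{E}[p^*(\theta_j)\theta_j]$ is the (constant) CDMA interference and $z^*_{\text{ordered}}(\theta)=\mathbb{E}[p^*_{\text{ordered}}(\theta_j)\theta_j \bm{1}_{\{\theta_j<\theta\}}]$ is the (type-dependent) NOMA interference. By \thref{coro:NOMA_strategy} and dominated convergence, $z^*_{\text{ordered}}(\theta)$ is continuous and non-decreasing in $\theta$, so $\Delta$ is continuous and non-increasing. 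Subtracting (\ref{eq:BR_CDMA}) from (\ref{eq:BR_NOMA2}) shows that whenever neither best response is clamped to the boundary of $\mathcal{E}$, one has $p^*_{\text{ordered}}(\theta)-p^*(\theta)=\Delta(\theta)/\theta$, so the sign of the strategy gap is controlled by the sign of $\Delta$.

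At the low end, $z^*_{\text{ordered}}(\theta)\to 0$ as $\theta\to\inf M$, so $\Delta(\theta)\to\alpha z^*>0$ in the non-trivial case $z^*>0$ (the degenerate case $z^*=0$ would force the whole population to play $E_{\min}=0$ in both protocols and makes the proposition vacuous). The main step is then to prove $\bar z:=\mathbb{E}[p^*_{\text{ordered}}(\theta_j)\theta_j]>z^*$, which forces $\Delta(\theta)\to\alpha(z^*-\bar z)<0$ at the high end. I would introduce the CDMA self-consistency operator
\[
F(z):=\mathbb{E}\!\left[P_{\mathcal{E}}\!\left(\tfrac{1}{\beta\ln 2}-\tfrac{\alpha z+N_0}{\theta}\right)\theta\right],
\]
which is weakly decreasing in $z$ and, by \thref{THM:CDMA_contraction_mapping}, satisfies $F(z^*)=z^*$. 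Since $z^*_{\text{ordered}}(\theta)\le\bar z$ with strict inequality on a set of positive $\nu$-measure, and since in the non-degenerate regime a positive-measure set of $\theta$ has neither $P_{\mathcal{E}}$-input clamped to the same endpoint of $\mathcal{E}$, integrating the pointwise strict ordering yields $\bar z>F(\bar z)$. If $\bar z\le z^*$, monotonicity of $F$ would give $F(\bar z)\ge F(z^*)=z^*\ge\bar z$, contradicting $\bar z>F(\bar z)$; hence $\bar z>z^*$.

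Since $\Delta$ is continuous, non-increasing, strictly positive near $\inf M$ and strictly negative near $\sup M$, there exists $\theta_{\text{cross}}\in M$ with $\Delta(\theta_{\text{cross}})=0$; at that point both best-response inputs coincide, so $p^*(\theta_{\text{cross}})=p^*_{\text{ordered}}(\theta_{\text{cross}})$. Choosing $\theta_-<\theta_{\text{cross}}<\theta_+$ slightly on either side and inside the interior-response region, the strict inequalities survive $P_{\mathcal{E}}$ and give $[p^*(\theta_-)-p^*_{\text{ordered}}(\theta_-)]\cdot[p^*(\theta_+)-p^*_{\text{ordered}}(\theta_+)]<0$. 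For the data-rate consequence, at such a $\theta_+$ the NOMA user simultaneously transmits less power ($p^*_{\text{ordered}}(\theta_+)<p^*(\theta_+)$) and faces strictly larger interference ($z^*_{\text{ordered}}(\theta_+)>z^*$); both effects shrink the $\sinr$ and hence the capacity $\log_2(1+\theta_+\cdot\sinr)$, so the NOMA data rate strictly drops below the CDMA data rate at $\theta_+$, ruling out pointwise improvement.

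The main obstacle is the strict inequality $\bar z>z^*$: propagating the strict ordering $z^*_{\text{ordered}}(\theta)<\bar z$ through the truncation $P_{\mathcal{E}}$ requires isolating a positive-measure set of users that are interior to $\mathcal{E}$ under both candidate interference levels, and the same interior region is what permits choosing strict $\theta_-$ and $\theta_+$ without the projection collapsing the sign. The non-degeneracy implicit in the proposition (no population-wide saturation at $E_{\min}$ or $E_{\max}$) therefore has to be extracted carefully from the assumptions on $M$, $N_0$, $E_{\max}$ and $\beta$ before the monotonicity-plus-fixed-point argument for $F$ can be deployed.
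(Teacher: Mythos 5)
Your proof is essentially correct but follows a genuinely different route from the paper's. The paper argues by contradiction: assuming no crossing, it uses the cutoff thresholds forced by $E_{\min}=0$ (namely $0<\theta_{th}^{\rm NOMA}<\theta_{th}^{\rm CDMA}$, obtained from the best-response formulas at zero power) to conclude $p^*\leq p^*_{\text{ordered}}$ everywhere, then shows the two weighted means $\mathbb{E}[p^*(\theta_j)\theta_j]$ and $\mathbb{E}[p^*_{\text{ordered}}(\theta_j)\theta_j]$ must coincide, hence $p^*=p^*_{\text{ordered}}$ $\nu$-a.e., which contradicts the strict separation of the two cutoffs together with continuity (\thref{coro:CDMA_strategy}, \thref{coro:NOMA_strategy}). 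You instead work constructively with the sign of $\Delta(\theta)=\alpha(z^*-z^*_{\text{ordered}}(\theta))$ and supply a monotone fixed-point comparison ($F$ non-increasing, $F(z^*)=z^*$, $\bar z>F(\bar z)$) to get the strict ordering $\bar z>z^*$, after which the intermediate value theorem locates the crossing. Your route buys more: it pins down the direction of the crossing (NOMA allocates more power to low-gain users and less to high-gain users) and yields the quantitative comparison $\mathbb{E}[p^*_{\text{ordered}}(\theta_j)\theta_j]>\mathbb{E}[p^*(\theta_j)\theta_j]$, neither of which the paper's contradiction argument provides; the paper's route is shorter and sidesteps the strictness bookkeeping in $\bar z>F(\bar z)$ by extracting the contradiction from the a.e.-equality of two continuous functions with distinct zero sets. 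Both arguments lean on the same unstated non-degeneracy (a positive-measure set of users with interior, unsaturated best responses, and a crossing away from the saturation region of $P_{\mathcal{E}}$ — the paper simply asserts "the crossing cannot happen at saturation"), so the obstacle you flag at the end is not a defect specific to your approach. Your data-rate conclusion is also obtained more directly (smaller power and larger interference at $\theta_+$ both shrink the SINR), whereas the paper routes it through the identity $d^*(\theta)=\log_2\left(\frac{1}{1-\beta\ln 2\; p(\theta)}\right)$ valid near the unsaturated crossing; the two are equivalent there.
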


\begin{proof}
First, we show that the curve of~$p^*$ crosses~$p^*_{\text{ordered}}$, which we will prove by contradiction.

Assume that the curves of~$p^*$ and~$p^*_{\text{ordered}}$ never cross each other. Define the cutoff thresholds $\theta_{th}^{\rm CDMA}, \theta_{th}^{\rm NOMA} \in M$ such that $p^*(\theta_i) = 0$ for any $\theta_i \leq \theta_{th}^{\rm CDMA}$ and $p^*_{\text{ordered}} (\theta_i) = 0$ for any $\theta_i \leq \theta_{th}^{\rm NOMA}$.

Since an equilibrium strategy is a best response~(\ref{eq:BR_CDMA}), (\ref{eq:BR_NOMA2}) to itself, if we set $p^*(\theta_{th}^{\rm CDMA}) = 0$ and $p^*_{\text{ordered}}(\theta_{th}^{\rm NOMA}) = 0$, we obtain that the cutoff thresholds satisfy~$0<\theta_{th}^{\rm NOMA} < \theta_{th}^{\rm CDMA}$. Thus, the nonexistence of crossing can be expressed as~$p^*(\theta_i) \leq p^*_{\text{ordered}}(\theta_i)$ for any~$\theta_i \in M$. Again, according to~(\ref{eq:BR_CDMA}), (\ref{eq:BR_NOMA2}),
\begin{equation}
p^*_{\text{ordered}}(\theta_i) = P_{\mathcal{E}} \left( \frac{1}{\beta \ln 2} - \frac{\alpha \mathbb{E}[p^*_{\text{ordered}}(\theta_j) \theta_j \bm{1}_{\{\theta_j < \theta_i\}}] + N_0}{\theta_i} \right),
\label{eq:cond_equilibrium_NOMA}
\end{equation}
it can be obtained that~$p^*(\theta_i) \leq p^*_{\text{ordered}}(\theta_i)$ implies~$\mathbb{E}[p^*(\theta_j) \theta_j] \geq \mathbb{E}[p^*_{\text{ordered}}(\theta_j) \theta_j \bm{1}_{\{\theta_j < \theta_i\}}]$, which holds for any~$\theta_i \in M$. By taking a sufficiently large~$\theta_i \in M$, we conclude that~$\mathbb{E}[p^*(\theta_j) \theta_j] \geq \mathbb{E}[p^*_{\text{ordered}}(\theta_j) \theta_j]$.

On the other hand, since~$p^*(\theta_i) \leq p^*_{\text{ordered}}(\theta_i)$, we have~$\mathbb{E}[p^*(\theta_j) \theta_j] \leq \mathbb{E}[p^*_{\text{ordered}}(\theta_j) \theta_j]$. Thus we have~$\mathbb{E}[p^*(\theta_j) \theta_j] =\mathbb{E}[p^*_{\text{ordered}}(\theta_j) \theta_j]$ as the inequality holds for both directions.

Given~$p^*(\theta_i) \leq p^*_{\text{ordered}}(\theta_i)$, we obtain that
\begin{align*}
& \ \ \  \mathbb{E}[\abs{p^*_{\text{ordered}}(\theta_j) - p^*(\theta_j)} \theta_j]\\
&=\mathbb{E}[p^*_{\text{ordered}}(\theta_j) \theta_j] - \mathbb{E}[p^*(\theta_j) \theta_j]=0.
\end{align*}

The expression above can be equivalently interpreted as
\begin{align*}
0 &= \mathbb{E}[\abs{p^*_{\text{ordered}}(\theta_j) - p^*(\theta_j)} \theta_j]\\
&= \int\limits_{\theta_j \in M} \abs{p^*_{\text{ordered}}(\theta_j) - p^*(\theta_j)} d\nu(\theta_j),
\end{align*}
i.e.,~$p^* = p^*_{\text{ordered}}$, $\nu$-a.e, where~$\nu$ is defined in~(\ref{eq:new_measure}).

The cutoff thresholds~$0<\theta_{th}^{\rm NOMA} < \theta_{th}^{\rm CDMA}$ are strictly different for~$p^*$ and~$p^*_{\text{ordered}}$, as shown above. Since both functions~$p^*:M \to \mathcal{E}$ and~$p^*_{\text{ordered}}:M \to \mathcal{E}$ are shown to be continuous with respect to~$\theta_i \in M$ in \thref{coro:CDMA_strategy} and \thref{coro:NOMA_strategy}, it is impossible to have~$p^* = p^*_{\text{ordered}}$, $\nu$-a.e.

Therefore, contradiction emerges, which verifies that the power allocation strategies~$p^*$ crosses~$p^*_{\text{ordered}}$. Denote the type variable at which~$p^*$ crosses~$p^*_{\text{ordered}}$ as~$\theta_{\text{cross}} \in M$, this is equivalent to $\exists \ \theta_- < \theta_{\text{cross}} < \theta_+$ with $[p^*(\theta_-) - p^*_{\text{ordered}}(\theta_-)] \cdot [p^*(\theta_+) - p^*_{\text{ordered}}(\theta_+)] < 0$.

Secondly, we show that pointwise improvement in the curve of equilibrium data rate achieved for different users is impossible.

The crossing behaviors cannot happen at saturation region, i.e.,~$E_{\min}$ or~$E_{\max}$. According to the best response equation~(\ref{eq:BR_CDMA}) and~(\ref{eq:BR_NOMA2}) as well as the continuity of~$p^*(\theta_i)$ and~$p^*_{\text{ordered}}(\theta_i)$ with respect to~$\theta_i \in M$, there exists a sufficiently small~$\delta>0$ such that for any~$\abs{\theta - \theta_{\text{cross}}} < \delta$, we have
\begin{align*}
p^*(\theta) = \frac{1}{\beta \ln 2} - \frac{\alpha \mathbb{E}[p^*(\theta_j) \theta_j] + N_0}{\theta}
\end{align*}
and
\begin{align*}
p^*_{\text{ordered}}(\theta) = \frac{1}{\beta \ln 2} - \frac{\alpha \mathbb{E}[p^*_{\text{ordered}}(\theta_j) \theta_j \bm{1}_{\{\theta_j < \theta\}}] + N_0}{\theta}.
\end{align*}

We denote the equilibrium data rate achieved by CDMA and NOMA at user type~$\theta_i \in M$ as~$d_{\rm {CDMA}}^*(\theta_i) = \log_2 \left( 1 + \theta_i \frac{p^*(\theta_i)}{\alpha \mathbb{E}[p^*(\theta_j) \theta_j] + N_0} \right)$ and~$d_{\rm {NOMA}}^*(\theta_i) = \log_2 \left( 1 + \theta_i \frac{p^*(\theta_i)}{\alpha \mathbb{E}[p^*_{\text{ordered}}(\theta_j) \theta_j \bm{1}_{\{\theta_j < \theta_i\}}] + N_0} \right)$ respectively. Then, based on the conditions on~$\theta_{\text{cross}}$ above, it can be obtained that for any~$\abs{\theta - \theta_{\text{cross}}} < \delta$,
\begin{equation}
d_{\rm {CDMA}}^*(\theta) = \log_2 \left( \frac{1}{1-\beta \ln2 \ p^*(\theta)} \right)
\end{equation}
and
\begin{equation}
d_{\rm {NOMA}}^*(\theta) = \log_2 \left( \frac{1}{1-\beta \ln2 \ p^*_{\text{ordered}}(\theta)} \right).
\end{equation}

The function~$d_{\rm {CDMA}}^*(\theta)$ and~$d_{\rm {NOMA}}^*(\theta)$ have the same monotonicity properties with respect to~$\theta \in M$ as~$p^*(\theta)$ and~$p^*_{\text{ordered}}(\theta)$ respectively when~$\abs{\theta - \theta_{\text{cross}}} < \delta$. Therefore, the curve of equilibrium data achieved for different users under CDMA crosses that under NOMA, i.e., pointwise improvement in the equilibrium data rate is infeasible through adopting NOMA instead of CDMA.
\end{proof}

\section{Simulations}
\label{sec:simulations}
In this section, we numerically illustrate the results concerning the properties of the equilibrium strategy profile under both CDMA (with fierce competition) and NOMA (with regulating effects among the user population).


First, we introduce some parameters and setups adopted in the simulation. We suppose that the channel gain $h_i$ for each user follows Rayleigh fading. Specifically, for an arbitrary user, the probability density function for the squared magnitude of its channel gain~$\theta_i=\norm{h_i}^2$ is
\begin{equation}
f(\theta_i) =
\begin{cases}
\frac{1}{\sigma} \exp(-\frac{\theta_i}{\sigma^2}), & \theta_i \geq 0; \cr
0, & \text{Otherwise}.
\end{cases}
\end{equation}

For the simulation we take the parameter $\sigma = 5$, and the probability density function is shown in the figure below.

\begin{figure}[H]
	\begin{center}
		\includegraphics[width=0.45\textwidth]{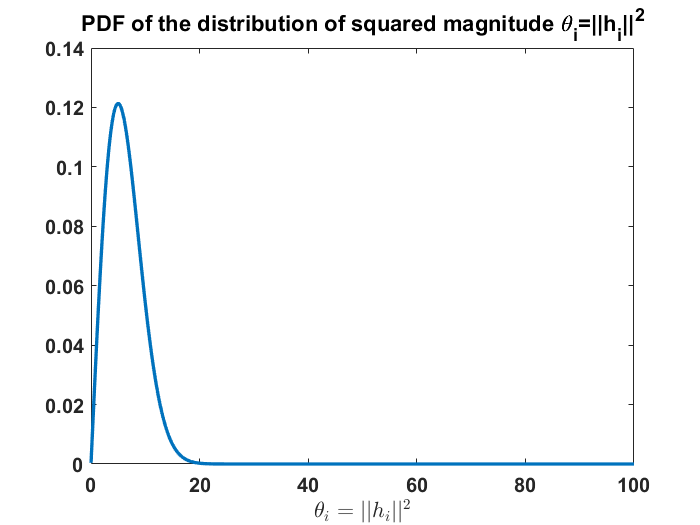}	
		\caption{The probability density function (PDF) of the squared magnitude of channel gain $\theta_i = \norm{h_i}^2$.}
		\label{fig:Rayleigh}
	\end{center}
\end{figure}

It is intractable to evaluate the behaviors of an infinite number of players for a numerical simulation, nor is it of interest in practice. Hence, the results we present below are generated with $N=1000$ players. The white noise process $w[k]$ in the additive white Gaussian noise (AWGN) channel features a power spectrum density $N_0 = 5$, and the spread spectrum parameter $\alpha = \frac{N}{n_s} = 0.25$ applies to both the case of CDMA and NOMA. In the aggregative game, we assume the set of feasible power levels is $\mathcal{E} = [0,150]$.

Now, we calculate the equilibrium power allocation strategy of the game~$\mathcal{G}$ as well as the corresponding data rates when CDMA and NOMA are adopted respectively, as shown in Fig.~\ref{fig:power_rate_compare}.

The equilibrium power allocation strategy is analysed first. It is noticed that the gap between the equilibrium strategy of CDMA and NOMA decreases as the value of the power penalty parameter~$\beta>0$ increases. An intuitive interpretation is that the parameter~$\beta$, which determines the cost of unit power consumption, will have a stronger regulating power when it takes a larger value because it results in a more conservative strategy for each user. Thus, through increasing the value of~$\beta$, fierce competitions in CDMA (i.e., high transmission power always results in a high data rate) can be relieved to a certain extent. Hence, the equilibrium power allocation under CDMA will gradually approach a natural fairness introduced through NOMA (i.e., signals from users with high channel gains or receiving gains benefit from their superiority of magnitude at the receiver, while others benefit from successive interference cancellation) as~$\beta$ increases. Besides, it is noticeable that the power consumption is significantly reduced with increase in the value of~$\beta$, for which an intuitive interpretation is the decrease of demand as unit price rises. The design problem of pricing in resource allocation has been investigated in~\cite{saraydar2002efficient,alpcan2002cdma,huang2006auction}.

\begin{figure*}[h!]
\begin{center}
\begin{subfigure}{}
\includegraphics[width=0.45\textwidth]{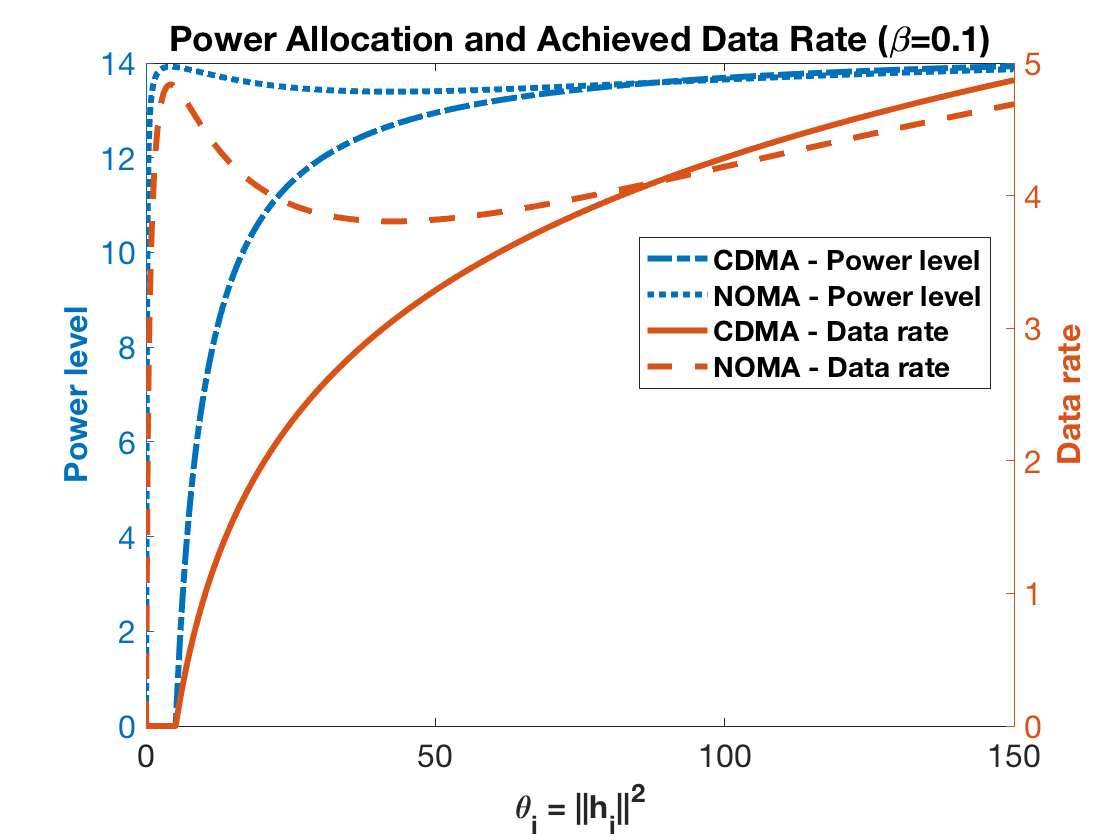}
\end{subfigure}
\begin{subfigure}{}
\includegraphics[width=0.45\textwidth]{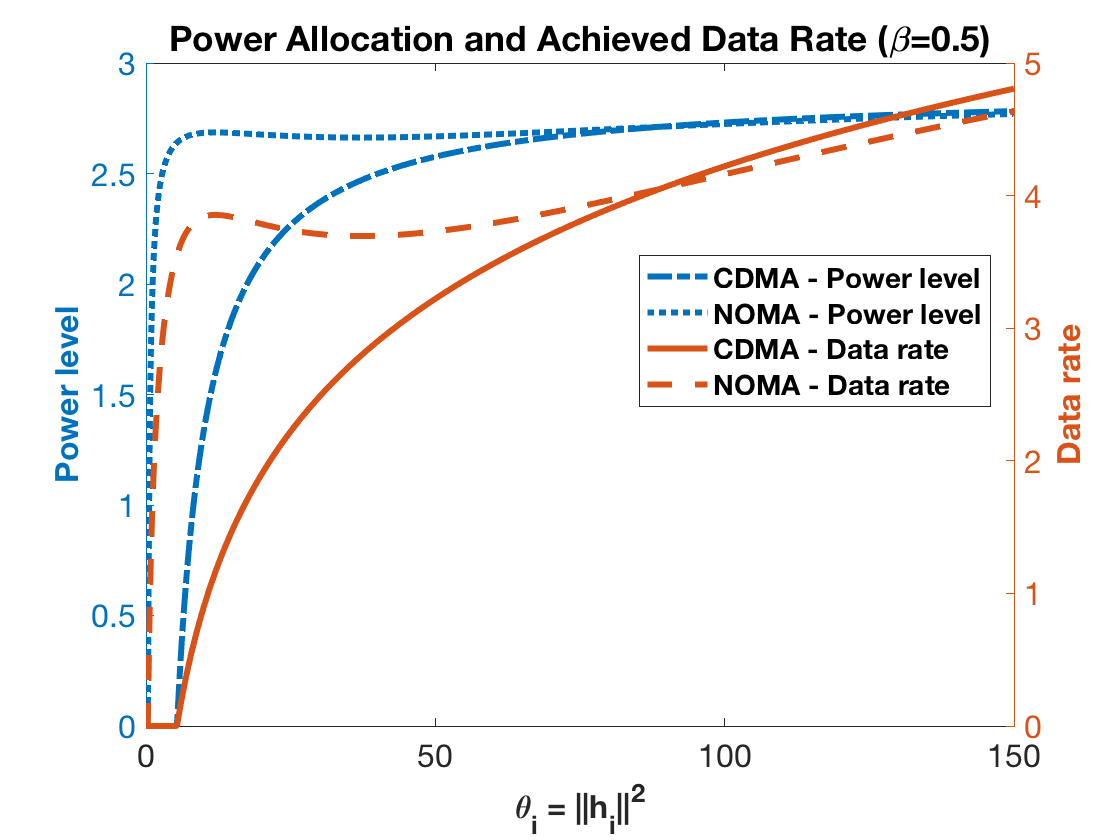}
\end{subfigure}

\begin{subfigure}{}
\includegraphics[width=0.45\textwidth]{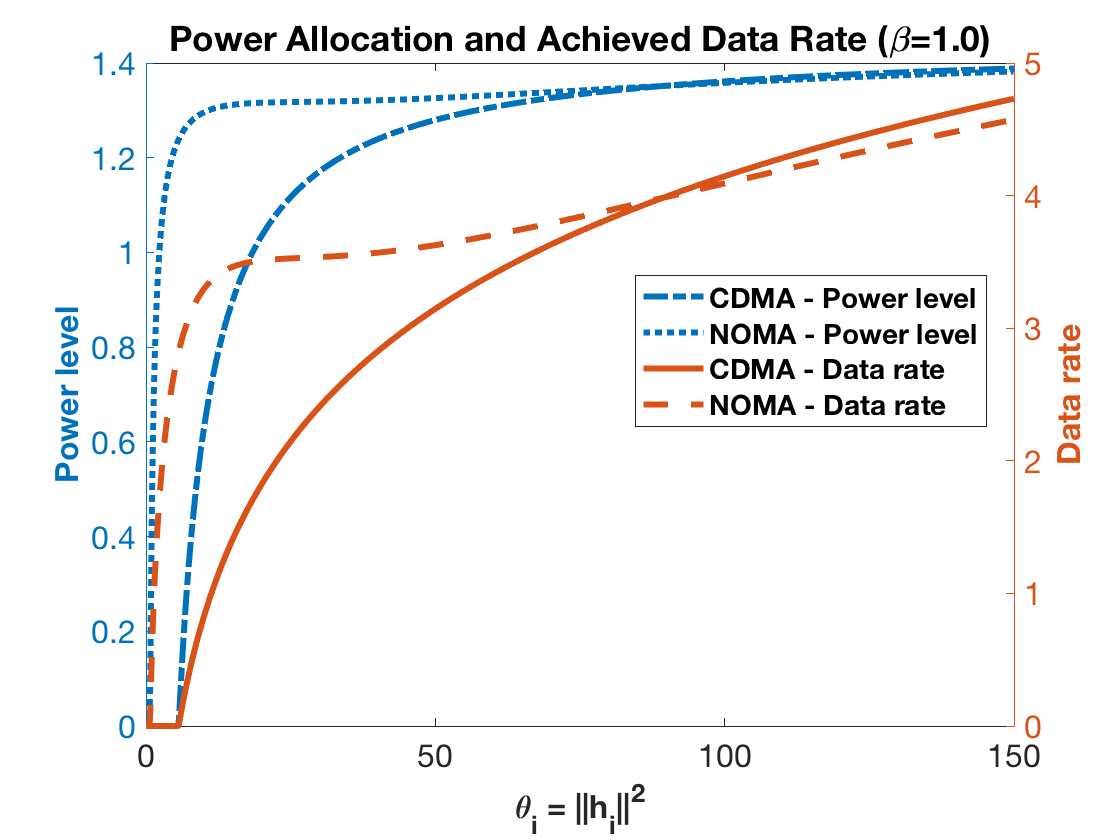}
\end{subfigure}
\begin{subfigure}{}
\includegraphics[width=0.45\textwidth]{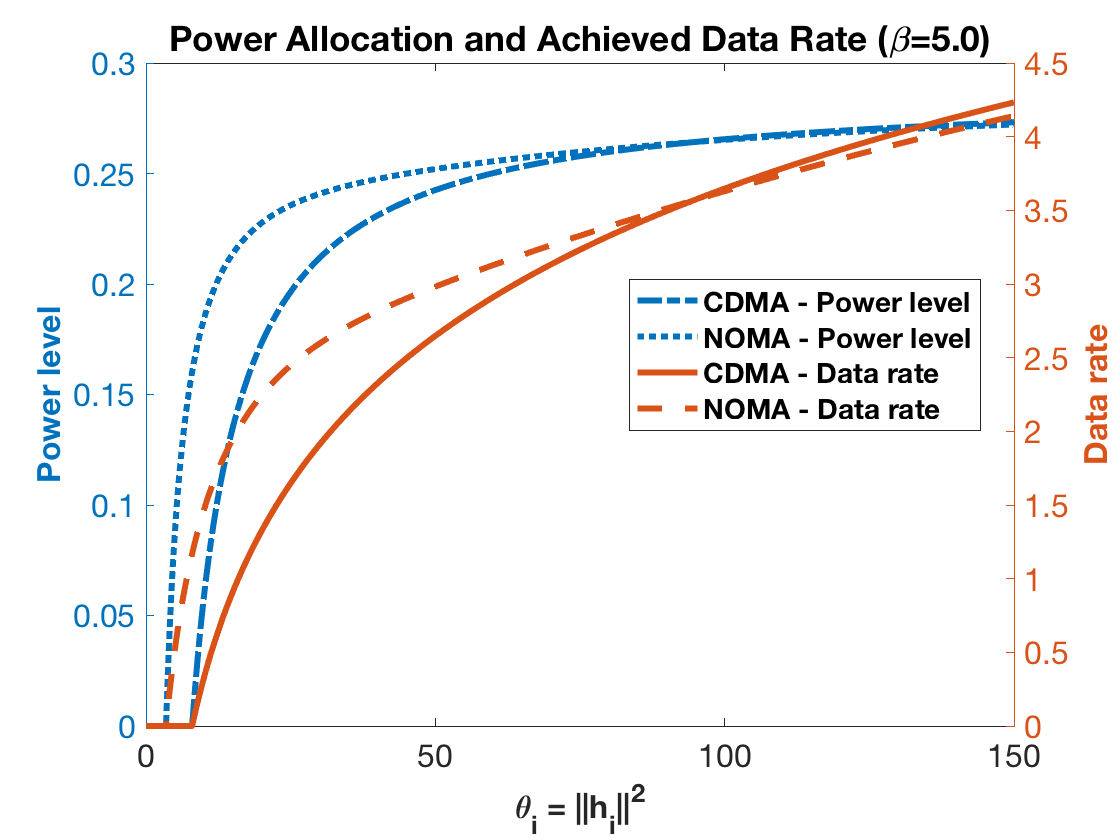}
\end{subfigure}
\caption{The equilibrium power control strategy and data rate achieved for different values of the power penalty parameter $\beta$.}
\label{fig:power_rate_compare}
\end{center}
\end{figure*}

The improvement in user fairness achieved by NOMA can be observed from the curves of achieved data rate in Fig.~\ref{fig:power_rate_compare}. With the same power penalty parameter~$\beta>0$ for unit power consumption, NOMA features a better fairness in achieved data rates than CDMA, which is discussed as follows.


For a rigorous comparison, we employ Jain's fairness index~\cite{jain1984quantitative} to evaluate the fairness in equilibrium data rate of the user population. In our simulation with $N=1000$ users, we denote the data rate achieved by user $i \ (1 \leq i \leq N)$ as $d_i \geq 0$. The Jain's fairness index is defined as
\begin{equation}
 J(d_1,d_2,\ldots,d_N) := \frac{\left( \sum\limits_{i=1}^N d_i \right)^2}{N \cdot \sum\limits_{i=1}^N d_i^2} = \frac{\overline{d}^2}{\overline{d^2}} \in (0,1],
 \end{equation}
 where a larger value of the index implies a better fairness. For the values of $\beta$ considered in Fig.~\ref{fig:power_rate_compare}, the values of Jain's fairness for CDMA and NOMA in the simulation are listed in TABLE~\ref{table:fairness}.
\begin{table}[H]
\centering
\begin{tabular}{ |c|c|c|c|c| } 
 \hline
 $\beta$ & 0.1 & 0.5 & 1.0 & 5.0 \\ \hline
 CDMA & 0.8829 & 0.8802 & 0.8768 & 0.8518\\ \hline
 NOMA & 0.9931 & 0.9864 & 0.9766 & 0.9205\\ 
 \hline
\end{tabular}
  \caption{Jain's fairness index for equilibrium data rate.}
\label{table:fairness}
\end{table}

As indicated by~Fig.~\ref{fig:power_rate_compare}, though NOMA outperforms CDMA in terms of fairness in equilibrium data rate, the increase in~$\beta$ is undesirable for NOMA due to a decreasing fairness. Aside from that, the level of achieved data rates in general only slightly decreases with a larger~$\beta$.

\begin{remark}
Based on these comparisons, we summarize some empirical findings concerning the applicability of CDMA and NOMA.

\begin{itemize}
\item[(1)] For the cases with a small cost for power consumption (i.e., $\beta>0$ takes a small value), NOMA is preferable for its advantages in the fairness achieved;
\item[(2)] For the case of costly power resources (i.e., $\beta>0$ takes a large value), the performance gap between CDMA and NOMA is negligible, while CDMA is more convenient for implementation.
\end{itemize}
\thlabel{remark:empirical}
\end{remark}

In the following, we focus on some properties theoretically shown in the main results. It is of interest to provide some numerical verification to them. In Fig.~\ref{fig:power_rate_compare}, the properties established in \thref{coro:CDMA_strategy} and \thref{coro:NOMA_strategy} concerning the continuity and monotonicity of the equilibrium strategies have already been observed. In addition, since in Fig.~\ref{fig:power_rate_compare} the equilibrium data rate curves for CDMA and NOMA intersect, the property stated in \thref{prop:cross} is also demonstrated numerically.

The comparison between equilibrium social welfare under CDMA and NOMA, as analysed in \thref{lemma:nec_optimal}, is the main results in this paper. Thus, we numerically evaluate the expected utility of all participants, i.e., the objective function $J(p,z)=\mathbb{E}[u (p(\theta_i) , p, \theta_i)]$ defined in~(\ref{eq:expected_utility}). This metric of social welfare corresponds to the average level of individual utilities among a large number of non-cooperative uplink users.

The expected utilities under the game equilibrium are evaluated for different~$\beta>0$. Moreover, the two curves in Fig.~\ref{fig:expected_utility} for CDMA and NOMA illustrate the effectiveness of NOMA in social welfare enhancement.

\begin{figure}[h!]
	\begin{center}
		\includegraphics[width=0.45\textwidth]{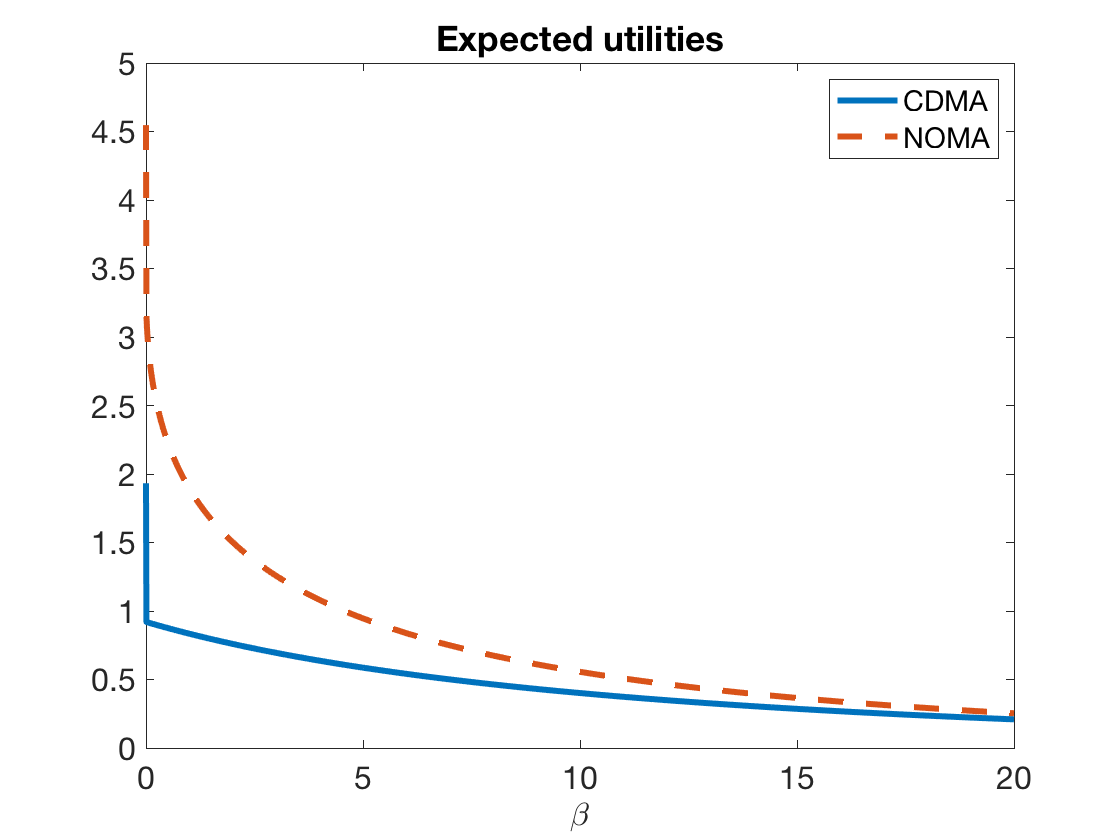}	
		\caption{The expected utility~$J(p^*,z^*)$ achieved under equilibrium strategies with CDMA and NOMA, respectively.}
		\label{fig:expected_utility}
	\end{center}
\end{figure}

It is observed from~Fig.~\ref{fig:expected_utility} that NOMA can indeed achieve a higher social welfare than CDMA, conforming the theoretical results in~\thref{lemma:nec_optimal}. In addition, we observe that when the value of $\beta>0$ is small, more effective performance improvement is achieved by NOMA, which is consistent with our intuitive analysis in \thref{remark:empirical}. When~$\beta>0$ takes a large value, the regulating effects of the energy cost dominates, and the benefits of implementing the NOMA protocol gradually shrink.

\section{Conclusion}
\label{sec:conclusions}
We have considered an uplink power control problem for wireless communication when a large number of users are competing for the channel resources. Both power-domain CDMA and NOMA are investigated. We performed equilibrium analysis of the non-cooperative channel access with an aggregative game model so that the opponents' actions are captured collectively. The existence and uniqueness of an equilibrium strategy are established for both CDMA and NOMA. Moreover, performance evaluation has been conducted under the equilibrium strategies. It turns out that NOMA achieves a better social welfare in the non-cooperative user population at its equilibrium. In addition, simulation results show an improved fairness in the equilibrium data rate under NOMA.

\appendices
\section{Proof of~\thref{THM:CDMA_contraction_mapping}}
\label{sec:proof_thm1}
\begin{proof}
For convenience of presentation, based on~(\ref{eq:BR_CDMA}), we define a ``strategy-wise" best response operator $\overline{\mathcal{BR}}: L^1(M, \mathbb{R},\nu) \to L^1(M, \mathbb{R},\nu)$ which performs updates to strategy profiles based on the utility function of all players $\theta_i \in M$. The operator~$\overline{\mathcal{BR}}$ is defined such that for any given strategy profile~$p_k \in L^1(M, \mathbb{R},\nu)$, we can obtain a new strategy profile~$\overline{p}_{k+1}$ through
\begin{equation}
\overline{p}_{k+1} := \overline{\mathcal{BR}} (p_k) \in L^1(M, \mathbb{R},\nu),
\label{eq:BR_strategy}
\end{equation}
which leads to
\begin{align*}
\overline{p}_{k+1} (\theta_i) = \frac{1}{\beta \ln 2} - \frac{\alpha \mathbb{E}[p_k (\theta_j) \theta_j] + N_0}{\theta_i} , \ \ \forall \theta_i \in M.
\end{align*}

Since in practice, the circuits can only transmit at the power levels within $\mathcal{E}$, we adopt the truncation operator $\mathcal{T}$ in \thref{def:saturation} so that a new feasible strategy $p_{k+1}$ is obtained as
\begin{equation*}
p_{k+1} = \mathcal{T} (p_{k+1}) = \mathcal{T} \circ \overline{\mathcal{BR}}(p_k) \
\in \mathcal{A}.
\end{equation*}

The normed space $\left( L^1(M, \mathbb{R},\nu), \norm{\cdot}_1^{\nu} \right)$ of functions defined on $M \subset \mathbb{R}_{++}$ is complete. The existence and uniqueness of mean-field equilibrium in the game $\mathcal{G}$ can be given in two steps.

The first step is to show that the composite operator $\mathcal{T} \circ \overline{\mathcal{BR}}$ defined on the space of feasible strategy profiles $L^1(M, \mathbb{R},\nu)$ is a contraction mapping. We pick up two arbitrary strategy profiles $p_k^{(1)}, p_k^{(2)}$ in the space $\left(L^1(M, \mathbb{R},\nu), \norm{\cdot}_1^{\nu} \right)$ and following the non-expansive results in \thref{lemma:T_nonexpansive}, we obtain
\begin{align*}
& \ \ \ \norm{ \mathcal{T} \circ \overline{\mathcal{BR}}(p_k^{(1)}) - \mathcal{T} \circ \overline{\mathcal{BR}} (p_k^{(2)})}_1^{\nu}\\
& \leq \norm{\overline{\mathcal{BR}}(p_k^{(1)}) - \overline{\mathcal{BR}} (p_k^{(2)})}_1^{\nu}\\
&=\int\limits_{\theta_i \in M} \abs{\alpha \mathbb{E}[p_k^{(1)} (\theta_j) \theta_j]- \alpha \mathbb{E}[p_k^{(2)} (\theta_j) \theta_j] } f(\theta_i) d \lambda(\theta_i)\\
&= \int\limits_{\theta_i \in M} f(\theta_i) d \lambda(\theta_i) \cdot \alpha \abs{ \mathbb{E}[p_k^{(1)} (\theta_j) \theta_j]- \mathbb{E}[p_k^{(2)} (\theta_j) \theta_j] }\\
&=P(M) \cdot \alpha \abs{\int\limits_{\theta_i \in M} [p_k^{(1)} (\theta_j) -  p_k^{(2)} (\theta_j) ] \theta_j f(\theta_j) d \lambda(\theta_j)}\\
& \leq 1 \cdot \alpha \int\limits_{\theta_i \in M} \abs{p_k^{(1)} (\theta_j) -  p_k^{(2)} (\theta_j)} \theta_j f(\theta_j) d \lambda(\theta_j)\\
&=\alpha \norm{p_k^{(1)} - p_k^{(2)}}_1^{\nu}.
\end{align*}

Since it is assumed that $\alpha < 1$, we obtain that $\alpha \in (0,1)$ based on the CDMA protocol. Therefore, the composite operator $\mathcal{T} \circ \overline{\mathcal{BR}}$ is a contraction mapping on the Banach space $\left(L^1(M, \mathbb{R},\nu), \norm{\cdot}_1^{\nu} \right)$. By Banach fixed point theorem, the operator $\mathcal{T} \circ \overline{\mathcal{BR}}$ has a unique fixed point $p^*$ in the space $\left(L^1(M, \mathbb{R},\nu), \norm{\cdot}_1^{\nu} \right)$. Moreover, in order to show that the game $\mathcal{G}$ admits a feasible mean-field equilibrium, below we show that the fixed point $p^*$ of the ``strategy-wise" best response operator $\mathcal{T} \circ \overline{\mathcal{BR}}$ lies within the space of feasible strategies~$\mathcal{A}$.

As~$\mathcal{T} \circ \overline{\mathcal{BR}}$ is a contraction mapping on the Banach space $\left(L^1(M, \mathbb{R},\nu), \norm{\cdot}_1^{\nu} \right)$, a Cauchy sequence~$\{p_k\}_{k \in \mathbb{N}}$ can be constructed with any given initial element $p_0 \in \left(L^1(M, \mathbb{R},\nu), \norm{\cdot}_1^{\nu} \right)$ such that its limit point will be exactly the unique fixed point~$p^*$ of~$\mathcal{T} \circ \overline{\mathcal{BR}}$. According to Theorem~1 in Section~2.12 of~\cite{luenberger1997optimization}, if the set of feasible strategies~$\mathcal{A}$, as a subset of the Banach space $\left(L^1(M, \mathbb{R},\nu), \norm{\cdot}_1^{\nu} \right)$, is closed, the set~$\mathcal{A}$ is complete. Hence, the limit~$p^*$ of the constructed Cauchy sequence~$\{p_k\}_{k \in \mathbb{N}}$ will lie within~$\mathcal{A}$. Now, it remains to show that~$\mathcal{A}$ is closed.

For any~$q$ in the closure~$\overline{\mathcal{A}}$ of~$\mathcal{A}$, there exists a sequence~$\{q_k\}_{k \geq 1}$ such that~$\norm{q_k - q}_1^{\nu} \to 0$ as~$k \to \infty$. According to Corollary~2.32 in~\cite{folland2013real}, this implies the existence of a subsequence~$\{q_{k_j}\}$ such that~$q_{k_j} \to q$, $\nu$-a.e. Hence, the point~$q \in \overline{\mathcal{A}}$ is~$\mathcal{E}$-valued~$\nu$-a.e, so~$q \in \mathcal{A}$. Therefore, the set~$\mathcal{A}$ is closed.

By the definition of mean-field equilibrium (\thref{def:MFE}), the existence and uniqueness of mean-field equilibrium in the game~$\mathcal{G}$ adopting CDMA are shown. Banach fixed point theorem also gives a convergent sequence to the equilibrium point, i.e., $\lim\limits_{k \to \infty} p_k = p^*$, where $p_{k+1} (\theta_i) \in \mathcal{BR}(\theta_i, p_k)$ for any~$\theta_i \in M$ and $k \geq 0$. Hence, the proof is concluded.
\end{proof}

\section{Proof of~\thref{coro:CDMA_strategy}}
\label{sec:proof_coro1}
\begin{proof}
Since the existence and uniqueness of mean-field equilibrium in the game~$\mathcal{G}$ adopting CDMA protocol have already been shown in \thref{THM:CDMA_contraction_mapping}, by the definition of a mean-field equilibrium, we can express the equilibrium strategy profile~$p^*$ as a best response, i.e.,
\begin{align*}
p^*(\theta_i) & \in \mathcal{BR} (\theta_i, p^*) =\argmax\limits_{a_i \in \mathcal{E}} \ u(a_i,p^*,\theta_i)\\
&=\argmax\limits_{a_i \in \mathcal{E}} \ \log_2 \left( 1 + \frac{\theta_i a_i}{\alpha \mathbb{E}[p^*(\theta_j) \theta_j] + N_0} \right) - \beta a_i.
\end{align*}

For an arbitrarily fixed user~$\theta_i \in M$, when the opponents' strategy is fixed to the equilibrium strategy~$p^* \in~\mathcal{A}$, it turns out that the utility~$u(a_i,p^*,\theta_i)$ is a strictly concave function with respect to~$a_i \in \mathcal{E}$. Hence, the best response of player~$\theta_i$ is a singleton, i.e.,~$\mathcal{BR} (\theta_i,p)$ takes a unique value for each~$\theta_i \in~M$ under a fixed~$p$. According to Theorem 9.17 in \cite{sundaram1996first}, for a strictly concave continuous function~$u(a_i,p,\theta_i)$ under any fixed~$p$, the single-valued maximizer~$\mathcal{BR}(\theta_i,p)$ is a continuous function with respect to the parameter~$\theta_i$. Due to the existence and uniqueness of the mean-field equilibrium strategy profile~$p^*$, the function~$p^* (\theta_i) \in~\mathcal{BR}(\theta_i,p^*)$, as a best response to itself, is a continuous function.

Next, we proceed to show that the equilibrium strategy profile $p^*: M \to \mathcal{E}$ is monotonically increasing with respect to $\theta_i \in M$. Beforehand, it is necessary to show that the utility function $u(a_i,p,\theta_i)$, for any fixed $p$, satisfies strictly increasing difference in $(a_i,\theta_i)$. In other words, we need to verify that
\begin{equation}
u(a_i^+,p,\theta_i^+) - u(a_i^-,p,\theta_i^+) > u(a_i^+,p,\theta_i^-) - u(a_i^-,p,\theta_i^-)
\label{ineq:increasing_difference}
\end{equation}
for any $a_i^+ > a_i^-$ and $\theta_i^+ > \theta_i^-$ given a fixed $p$.

From the expression of the utility function $u$, we obtain
\begin{align*}
& \ \ \ \ u(a_i^+,p,\theta) - u(a_i^-,p,\theta)\\
&= \log_2 \left( 1 + \frac{a_i^+ - a_i^-}{\frac{\alpha \mathbb{E}[p^*(\theta_j) \theta_j] + N_0}{\theta} + a_i^-} \right) - \beta (a_i^+ - a_i^-).
\end{align*}
Thus, it is obvious that $u(a_i^+,p,\theta) - u(a_i^-,p,\theta)$ is monotonically increasing with respect to $\theta$ for any given $a_i^+ > a_i^-$, which leads to~(\ref{ineq:increasing_difference}). Besides, for any fixed $p$ and $\theta_i$, the utility function $u$ is a continuous function defined on a compact interval $\mathcal{E}=[u_{\min},u_{\max}]$. According to the extreme value theorem, the utility $u$ must attain its maximum within $\mathcal{E}$ for any given $p$ and $\theta_i$. Therefore, according to Theorem 10.6 in \cite{sundaram1996first}, the mean-field equilibrium strategy $p^*$, as the optimal action for maximizing $u(a_i,p^*,\theta_i)$, is monotonically increasing with respect to the identifier $\theta_i \in M$.
\end{proof}

\section{Proof of~\thref{THM:NOMA2_contraction_mapping}}
\label{sec:proof_thm2}
\begin{proof}
Similar to the analysis in the CDMA case, based on~(\ref{eq:BR_NOMA2}), we can define a ``strategy-wise" best response operator $\overline{\mathcal{BR}}_{\text{ordered}}: L^{\infty}(M, \mathbb{R},\nu) \to L^{\infty}(M, \mathbb{R},\nu)$. The operator~$\overline{\mathcal{BR}}_{\text{ordered}}$ is defined such that for any given strategy profile $p_k \in L^{\infty}(M, \mathbb{R},\nu)$, we can obtain a new strategy profile $\overline{p}_{k+1}$ as the optimal response to $p_k$ through
\begin{equation}
\overline{p}_{k+1} := \overline{\mathcal{BR}}_{\text{ordered}} (p_k) \in L^{\infty}(M, \mathbb{R},\nu),
\label{eq:BR_strategy}
\end{equation}
which gives
\begin{align*}
\overline{p}_{k+1} (\theta_i) = \frac{1}{\beta \ln 2} - \frac{\alpha \mathbb{E}[p_k (\theta_j) \theta_j \bm{1}_{\{\theta_j < \theta_i\}}] + N_0}{\theta_i} , \ \ \forall \theta_i \in M.
\end{align*}

To take into account the power constraints, we define a truncation operator $\tilde{\mathcal{T}}: L^{\infty}(M, \mathbb{R},\nu) \to L^{\infty}(M, \mathbb{R},\nu)$ in analogy to \thref{def:saturation}. The operator $\tilde{\mathcal{T}}$ is defined such that $\tilde{p} = \tilde{\mathcal{T}}(p)$ if and only if $\tilde{p}(x) = P_{\mathcal{E}} (p(x))$ for any $x \in M$. Then, with similar arguments as in \thref{lemma:T_nonexpansive}, the operator $\tilde{\mathcal{T}}$ is also non-expansive on the Banach space $\left( L^{\infty}(M, \mathbb{R},\nu), \norm{\cdot}_{\infty}^{\nu} \right)$.

We adopt the truncation operator $\tilde{\mathcal{T}}$ so that a new feasible strategy $p_{k+1}$ is obtained as
\begin{equation*}
p_{k+1} = \tilde{\mathcal{T}} (\overline{p}_{k+1}) = \tilde{\mathcal{T}} \circ \overline{\mathcal{BR}}_{\text{ordered}}(p_k) \in \mathcal{A}.
\end{equation*}

Since according to~\thref{def:MFE}, the definition of a mean-field equilibrium is equivalent to the fixed point of the best response operator, the uniqueness of the fixed point solution implies the uniqueness of a mean-field equilibrium of the game~$\tilde{G}$. By Banach fixed-point theorem, there is a unique solution $p^*_{\text{ordered}} \in \mathcal{B}$ to the fixed point equation $p = \tilde{\mathcal{T}} \circ \overline{\mathcal{BR}}_{\text{ordered}} (p)$ if $\tilde{\mathcal{T}} \circ \overline{\mathcal{BR}}_{\text{ordered}}$ is a contraction mapping on $L^{\infty}(M, \mathbb{R},\nu)$.

We pick up two arbitrary transmission power control policies $p_k, \ p'_k \in L^{\infty}(M, \mathbb{R},\nu)$ and obtain
\begin{align*}
& \ \ \ \norm{\tilde{\mathcal{T}} \circ \overline{\mathcal{BR}}_{\text{ordered}} (p_k) - \tilde{\mathcal{T}} \circ \overline{\mathcal{BR}}_{\text{ordered}} (p'_k)}_{\infty}^{\nu}\\
& \leq \norm{\overline{\mathcal{BR}}_{\text{ordered}} (p_k) - \overline{\mathcal{BR}}_{\text{ordered}} (p'_k)}_{\infty}^{\nu}\\
& \leq \norm{\frac{\alpha \int\limits_{M \cap (0,\theta_i]} \abs{p_k(y) - p'_k(y)} y f (y) dy}{\theta_i}}_{\infty}^{\nu}\\
& \leq \norm{\frac{\alpha \int\limits_{M \cap (0,\theta_i]}  \norm{p_k - p'_k}_{\infty}^{\nu} y f (y) dy}{\theta_i}}_{\infty}^{\nu}\\
&=  \norm{p_k - p'_k}_{\infty}^{\nu} \cdot \norm{ \frac{\alpha \int\limits_{M \cap (0,\theta_i]} y f (y) dy}{\theta_i}}_{\infty}^{\nu}\\
&\leq \sup\limits_{\theta_i \in M} \frac{\alpha \int\limits_{M \cap (0,\theta_i]} y f (y) dy}{\theta_i} \cdot \norm{p_k - p'_k}_{\infty}^{\nu}\\
&\leq \sup\limits_{\theta_i \in M} \frac{\alpha \int\limits_{M \cap (0,\theta_i]} \theta_i f(y) dy}{\theta_i} \cdot \norm{p_k - p'_k}_{\infty}^{\nu}\\
&= \sup\limits_{\theta_i \in M} \ \alpha \int\limits_{M \cap (0,\theta_i]} f(y) dy \cdot \norm{p_k - p'_k}_{\infty}^{\nu} \leq \alpha \cdot \norm{p_k - p'_k}_{\infty}^{\nu}.
\end{align*}

To summarize, we have shown that the inequality $\norm{\tilde{\mathcal{T}} \circ \overline{\mathcal{BR}}_{\text{ordered}} (p_k) - \tilde{\mathcal{T}} \circ \overline{\mathcal{BR}}_{\text{ordered}} (p'_k)}_{\infty}^{\nu} \leq \alpha \norm{p_k - p'_k}_{\infty}^{\nu}$ holds, where~$0<\alpha<1$. Therefore, the composite operator~$\tilde{\mathcal{T}} \circ \overline{\mathcal{BR}}_{\text{ordered}}$ is a contraction mapping on the Banach space~$\left( L^{\infty}(M, \mathbb{R},\nu), \norm{\cdot}_{\infty}^{\nu} \right)$. Similar to \thref{THM:CDMA_contraction_mapping}, in order to show that the fixed point~$p^*_{\text{ordered}}$ reside in~$\mathcal{A}$, it remains to show that the set of feasible strategies $\mathcal{A}$ is closed.

Again, we pick up a point~$q$ in the closure~$\overline{\mathcal{A}}$ of~$\mathcal{A}$. Then there exists a sequence~$\{q_k\}_{k \geq 1}$ such that~$\norm{q_k - q}_{\infty}^{\nu} \to 0$ as~$k \to \infty$. Hence, according to~\cite{folland2013real}, a subsequence~$\{q_{k_j}\}$ exists such that~$q_{k_j} \to q$, $\nu$-a.e. Then~$q$ takes value in~$\mathcal{E}$, $\nu$-a.e. In other words, we have~$q \in \mathcal{A}$.

Therefore, it follows that there is a unique mean-field equilibrium~$p^*_{\text{ordered}}$ when NOMA is adopted. Banach fixed point theorem also gives a convergent sequence to the equilibrium point, i.e., $\lim\limits_{k \to \infty} p_k = p^*_{\text{ordered}}$, where $p_{k+1} (\theta_i) \in \mathcal{BR}_{\text{ordered}} (\theta_i, p_k)$ for any~$\theta_i \in M$ and $k \geq 0$. This completes the proof.
\end{proof}

\section{Proof of~\thref{coro:NOMA_strategy}}
\label{sec:proof_coro2}
\begin{proof}
By definition, the equilibrium strategy $p^*_{\text{ordered}}$ is the best response to itself given the utility function $u(a_i, p, \theta_i)$ for NOMA. In \thref{THM:NOMA2_contraction_mapping}, we have already established the existence and uniqueness of $p^*_{\text{ordered}}: M \to \mathcal{E}$. Then, we have
\begin{align*}
p^*_{\text{ordered}} (\theta_i) &\in \mathcal{BR}_{\text{ordered}} (\theta_i,p^*_{\text{ordered}})\\
& =: \argmax\limits_{a_i \in \mathcal{E}} \ u(a_i, p^*_{\text{ordered}}, \theta_i),
\end{align*}
where the last term is a singleton for each given $\theta_i \in M$.

By Theorem 9.17 in~\cite{sundaram1996first}, the best response $\mathcal{BR}_{\text{ordered}} (\theta_i,p^*_{\text{ordered}})$ is a upper semi-continuous correspondence on $M$. As $\mathcal{BR}_{\text{ordered}} (\theta_i,p^*_{\text{ordered}})$ is a singleton for any $\theta_i \in M$, it can be concluded that $p^*_{\text{ordered}} (\theta_i) \in \mathcal{BR}_{\text{ordered}} (\theta_i,p^*_{\text{ordered}})$ is continuous with respect to~$\theta_i \in M$.
\end{proof}

\bibliographystyle{IEEEtran}
\bibliography{references.bib}

\newpage
\begin{IEEEbiography}
{Yuchi Wu} received the B.Eng. degree from School of Electronic and Information Engineering, Beihang University, Beijing, China, in 2015. Later, he received the M.Phil. degree in 2017 and the Ph.D. degree in 2020, both from Department of Electronic and Computer Engineering, Hong Kong University of Science and Technology, Kowloon, Hong Kong. From March 2020 to May 2020, he was a visiting student in the School of Mathematics and Statistics, Carleton University, Ottawa, ON, Canada. His research interests include mean-field games, large-scale systems and state estimation in wireless sensor networks.
\end{IEEEbiography}

\begin{IEEEbiography}
{Junfeng Wu} received the B.Eng. degree from the
Department of Automatic Control, Zhejiang University, Hangzhou, China, and the Ph.D. degree in Electronic and Computer Engineering from Hong Kong
University of Science and Technology, Hong Kong,
in 2009, and 2013, respectively. From September
to December 2013, he was a Research Associate
in the Department of Electronic and Computer Engineering, Hong Kong University of Science and
Technology. From January 2014 to June 2017, he
was a Postdoctoral Researcher in the ACCESS (Autonomic Complex Communication nEtworks, Signals and Systems) Linnaeus
Center, School of Electrical Engineering, KTH Royal Institute of Technology,
Stockholm, Sweden. He is currently with the College of Control Science and
Engineering, Zhejiang University, Hangzhou, China. His research interests
include networked control systems, state estimation, and wireless sensor
networks, multi-agent systems. Dr. Wu received the Guan Zhao-Zhi Best Paper
Award at the 34th Chinese Control Conference in 2015.  He served as an associate editor for the European Control Conference 2020 and  a technical associate editor for the IFAC World Congress 2020.
\end{IEEEbiography}

\begin{IEEEbiography}
{Minyi Huang} received the B.Sc. degree from Shandong
University, Jinan, Shandong, China, in 1995, the M.Sc.
degree from the Institute of Systems Science, Chinese
Academy of Sciences, Beijing, in 1998, and the Ph.D. degree
from the Department of Electrical and Computer Engineering,
McGill University, Montreal, QC, Canada, in 2003,
all in systems and control.

He was a Research Fellow first in the Department of
Electrical and Electronic Engineering, the University of
Melbourne, Melbourne, Australia, from February 2004 to
March 2006, and then in the Department of Information
Engineering, Research School of Information Sciences and Engineering, the Australian
National University, Canberra, from April 2006 to June 2007. He joined the
School of Mathematics and Statistics, Carleton University, Ottawa, ON, Canada as
an Assistant Professor in 2007, where he is now a Professor. His research interests
include mean field stochastic control and dynamic games, multi-agent control and
computation in distributed networks with applications.
\end{IEEEbiography}

\begin{IEEEbiography}
{Ling Shi} received the B.S. degree in electrical and electronic engineering from Hong Kong University of Science and Technology, Kowloon, Hong Kong, in 2002 and the Ph.D. degree in Control and Dynamical Systems from California Institute of Technology, Pasadena, CA, USA, in 2008. He is currently a Professor in the Department of Electronic and Computer Engineering, and the associate director of the Robotics Institute, both at the Hong Kong University of Science and Technology. His research interests include cyber-physical systems security, networked control systems, sensor scheduling, event-based state estimation and exoskeleton robots. He is a senior member of IEEE. He served as an editorial board member for the European Control Conference 2013-2016. He was a subject editor for International Journal of Robust and Nonlinear Control (2015-2017). He has been serving as an associate editor for IEEE Transactions on Control of Network Systems from July 2016, and an associate editor for IEEE Control Systems Letters from Feb 2017. He also served as an associate editor for a special issue on Secure Control of Cyber Physical Systems in the IEEE Transactions on Control of Network Systems in 2015-2017. He served as the General Chair of the 23rd International Symposium on Mathematical Theory of Networks and Systems (MTNS 2018). He is a member of the Young Scientists Class 2020 of the World Economic Forum (WEF).
\end{IEEEbiography}
\end{document}